\newtheorem{theorem}{Theorem}
\newtheorem{lemma}{Lemma}
\newenvironment{remark}{%
  \emph{Remark. }
}{%
}
\newenvironment{acknowledgements}{%
  \medskip
  
  \emph{Acknowledgements. }
}{%
}
\newcommand{\bigO}				{\text{O}}
\newcommand{\floor}[1]          {\Bigl\lfloor #1 \Bigr\rfloor}
\newcommand{\ceil}[1]           {\Bigl\lceil #1 \Bigr\rceil}
\newcommand{\nomore}[1]			{}
\newcommand{\proba}[1]           {\mathbb{P}\left\{#1\right\}}
\newcommand{\expect}[1]         {\mathbb{E}\left[ #1 \right]}
\newcommand{\var}[1]       {\text{Var}\left[ #1 \right]}
\newcommand{\sexpect}[1]         {\mathbb{E}\left[ #1 \right]}
\newcommand{\N}{\mathbb{N}}
\newcommand{\lp}{\left(}
\newcommand{\rp}{\right)}
\newcommand{\bydef}{\stackrel{\rm{def}}{=}}
\begin{document}

\title{Decentralized List Scheduling}

\author{Marc Tchiboukdjian \and Nicolas Gast \and Denis Trystram}

\address{
Marc Tchiboukdjian
CNRS / CEA,DAM,DIF - Arpajon, France}
\email{marc.tchiboukdjian@imag.fr}
\address{Nicolas Gast, EPFL, IC-LCA2, Bâtiment BC, Station 14, 1015
  Lausanne, Switzerland}
\email{nicolas.gast@epfl.ch}
\address{Denis Trystram 
Grenoble University, 51 av Jean-Kuntzman, 38330 Montbonnot, France
}
\email{denis.trystram@imag.fr}

%=========================================================================
%  Abstract
%=========================================================================

\begin{abstract}
Classical list scheduling is a very popular and efficient technique for
scheduling jobs in parallel and distributed platforms.
It is inherently centralized. However, with the increasing number of
processors,
the cost for managing a single
centralized list becomes too prohibitive.
A suitable approach to reduce the contention is to distribute the list
among the computational units: each processor has only a local view of the work to execute.
Thus, the scheduler is no longer greedy and standard performance guarantees are lost.

\noindent
The objective of this work is to study the extra cost that must be paid
when the list is distributed among the computational units.
We first present a general methodology for computing the expected makespan
based on the analysis of an adequate potential function 
which represents the load unbalance between the local lists.
We obtain an equation on the evolution of the potential by 
computing its expected decrease in one step of the schedule.
Our main theorem shows how to solve such equations to bound
the makespan.
Then, we apply this method to 
several scheduling problems, namely, for
unit independent tasks, for weighted independent tasks and for tasks with precendence constraints.
More precisely, we prove that the time for scheduling a global workload $W$ composed of independent unit tasks
on $m$ processors is equal to $W/m$ plus an additional term proportional to $\log_2 W$.
We provide a lower bound which shows that this is optimal up to a constant.
This result is extended to the case of weighted independent tasks.
In the last setting, precedence task graphs, our analysis leads
to an improvement on the bound of \cite{ABP}.
We finally provide some experiments using a simulator. The distribution
of the makespan is shown to fit existing probability laws.
Moreover, the simulations give
a better insight on the additive term
whose value is shown to be around $3 \log_2 W$
confirming the tightness of our analysis.

\keywords{Scheduling \and List algorithms \and Work stealing}
\end{abstract}

\maketitle

\section{Introduction}
\label{sc:intro}

\subsection{Context and motivations}

Scheduling is a crucial issue while designing efficient parallel algorithms on new multi-core platforms.
The problem corresponds to distribute the tasks of an application (that we will called load) among available computational units
and determine at what time they will be executed.
The most common objective is to minimize 
the completion time of the latest task to be executed
(called the {\em makespan} and denoted by $C_{\max}$).
It is a hard challenging problem which received a lot of attention
during the last decade~\citep{handbook_scheduling}.
Two new books have been published recently on the topic \citep{Drozdowski,Robert-Vivien},
which confirm how active is the area.

List scheduling is one of the most popular technique for scheduling the tasks of a parallel program. 
This algorithm has been introduced by \cite{Graham69} and was used with profit in many further works
(for instance the earliest task first heuristic which
extends the analysis for communication delays in \cite{ETF}, 
for uniform machines in \cite{BenderChekuri}, or
for parallel rigid jobs in \cite{Schwiegelshohn}). 
Its principle is to build a list of ready tasks and schedule them
as soon as there exist available resources.
List scheduling algorithms are low-cost (greedy) whose performances are not too far from optimal solutions. 
Most proposed list algorithms differ in the way of considering the priority
of the tasks for building the list,
but they always consider a centralized management of the list.
However, today the parallel and distributed platforms involve more and more processors.
Thus, the time needed for managing such a centralized data structure can not be ignored anymore.
Practically, implementing such schedulers induces synchronization
overheads when several processors access the list concurrently.
Such overheads involve low-level synchronization mechanisms.

\nomore{
\subsection{Description of the problem}

% Justification of the model
We describe below the underlying computational model and define informally the problem to solve.
The parallel system is composed of a set of $m$ interconnected identical processors, that
must execute a set of tasks. The total amount of load is $W$ and can represent 
$W$ unit independent tasks, $n$ weighted tasks whose sum is equal to $W$ or tasks with dependencies.
Each processor owns its local queue of ready tasks and no one has a global view of the system. 
When the processor becomes idle, it can make a request of load to other processors in order to get some tasks.
If the request fails, another request can be sent at the next time slot.
Otherwise, a certain amount of load is transfered to this processor.
There is no explicit communication delays in this model since they are implicitly
taken into account in the cost of a load request.

In other words, the problem studied in this paper can be considered as a distributed version of the classical problems
$P \lvert \rvert C_{\max}$ and $P\lvert prec,p_j=1 \rvert C_{\max}$~\citep{handbook_scheduling}. 
}

\subsection{Related works}

Most related works dealing with scheduling consider centralized list algorithms.
However, at execution time, the cost for managing the list is neglected.
To our knowledge, the only approach that takes into account this extra management cost
is {\it work stealing}~\citep{BlumofeLeiserson} (denoted by WS in short). 

Contrary to classical centralized scheduling techniques, WS is by nature a distributed 
algorithm. Each processor manages its own list of tasks. 
When a processor becomes idle, it randomly
chooses another processor and \emph{steals} some work.
To model contention overheads, processors that request
work on the same remote list are in competition and only one can succeed.
WS has been implemented in many languages and parallel libraries including
Cilk~\citep{Cilk}, TBB~\citep{TBB} and KAAPI~\citep{Kaapi}.
It has been analyzed in a seminal paper of~\cite{BlumofeLeiserson}
where they show that the expected makespan of series-parallel
precedence graph with $W$ unit tasks on $m$ processors is bounded by
$\sexpect{C_{\max}} \le W/m + \bigO(D)$ where $D$
is the critical path of the graph (its depth).
This analysis has been improved in~\cite{ABP} using a proof
based on a potential function.
The case of varying processor
speeds has been analyzed in~\cite{BenderRabin}.
However, in all these previous analyses,
the precedence graph is constrained to have only one source and
out-degree at most $2$ which does not easily model
the basic case of independent tasks. 
Simulating independent tasks with a binary tree of precedences
gives a bound of $W/m+\bigO(\log W)$ as a complete binary tree of $W$
nodes has a depth of $D \le \log_2 W$. However, with this approach,
the structure of the binary tree dictates which tasks are stolen.
Our approach achieves a bound of the same order with a better constant and
processors are free to choose which tasks to steal. 
Notice that there exist other ways to analyze
work stealing where the work generation is probabilist
and that targets steady state results~\citep{WS_stable,WS_diff_eq,Nico}.

Another related approach which deals with distributed load balancing is 
\textit{balls into bins} games~\citep{BallsBins,WeightedBallsBins}. 
The principle is to study the maximum
load when $n$ balls are randomly thrown into $m$ bins. This is a simple distributed
algorithm which is different from the scheduling problems we are interested in. 
First, it seems hard to extend this kind of analysis for tasks with precendence constraints.
Second,
as the load balancing is done in one phase at the beginning, the cost of computing
the schedule is not considered. \cite{ParallelBallsBins} study parallel
allocations but still do not take into account contention on the bins.
Our approach, like in WS, considers contention on the lists.

Some works have been proposed for the analysis of algorithms in data structures
and combinatorial optimization
(including variants of scheduling) using potential functions. Our analysis is also
based on a potential function representing the load unbalance between the local queues.
This technique has been successfully used for analyzing 
convergence to Nash equilibria in game theory~\citep{SelfishLoad},
load diffusion on graphs~\citep{Diffusion}
and WS~\citep{ABP}. 

\subsection{Contributions}
 
List scheduling is centralized in nature.
The purpose of this work is to study the effects of decentralization on list scheduling.
The main result is a new framework
for analyzing distributed list scheduling algorithms (DLS). 
Based on the analysis of the load balancing between two processors during a work request,
it is possible to deduce the total expected number of work requests and then, to derive a bound on the expected makespan.

This methodology is generic and it is applied in this paper on several
relevant variants of the scheduling problem.
\begin{itemize}
\item We first show that the expected makespan of DLS applied on $W$ unit independent
tasks is equal to the absolute lower bound $W/m$ plus an additive term 
in $3.65\log_2 W$.
We propose a lower bound which shows that the analysis is tight up to a constant factor.
This analysis is refined and applied to several variants of the problem.
In particular, a slight change on the potential function improves the multiplicative factor from $3.65$ to $3.24$.
Then, we study the possibility of processors to cooperate while requesting some tasks in
the same list.
Finally, we study the initial repartition of the tasks and show that a balanced initial allocation induces less work requests.
\item 
Second, the previous analysis is extended to the weighted case of any unknown processing times.
The analysis achieves the same bound as before with an extra term involving $p_{\max}$ (the maximal value of the processing times).
\item Third, we provide a new analysis for the WS algorithm of \cite{ABP} for scheduling DAGs
that improves the bound on the number of work requests from 
$32mD$ to $5.5mD$.
\item Fourth, we developed a complete experimental campaign that gives statistical evidence that the makespan of DLS
follows known probability distributions depending on the considered variant. Moreover, the experiments show that the theoretical
analysis for independent tasks is almost tight: the overhead to $W/m$ is less than
$37\%$ away of the exact value.
\end{itemize}

\subsection{Content} 

We start by introducing the model and we recall the analysis for classical
list scheduling in Section~\ref{sec:model}. Then, we present the principle
of the analysis in Section~\ref{sec:thm} and
we apply this analysis on unit independent tasks in Section~\ref{sec:unit}.
%and the initial repartition of tasks. 
Section~\ref{sec:further} discusses variations on the unit tasks model:
improvements on the potential function and cooperation among thieves.
We extend the analysis for weighted independent tasks in Section~\ref{sec:weighted}
and for tasks with dependencies in Section~\ref{sec:dag}.
Finally, we present and analyze simulation experiments in
Section~\ref{sec:simu}.

\section{Model and notations}
\label{sec:model}

\subsection{Platform and workload characteristics}

We consider a parallel platform composed of $m$ identical processors
and a workload of $n$ tasks with processing times $p_j$. The total work of
the computation is denoted by $W=\sum_{j=1}^n p_j$. The tasks can
be independent or constrained by a directed acyclic graph (DAG) of precedences.
In this case, we denote by $D$ the critical path of the DAG (its depth).
We consider an online model where the processing times and precedences are discovered
during the computation. More precisely, we learn the processing time of a task when
its execution is terminated and we discover new tasks in the DAG only when all their precedences
have been satisfied. The problem is to study the maximum completion time 
({\em makespan} denoted by~$C_{\max}$) taking into account the scheduling cost.

\subsection{Centralized list scheduling}

Let us recall briefly the principle of list scheduling as it was
introduced by \cite{Graham69}. 
The analysis states that the makespan of any list algorithm
is not greater than twice the optimal makespan.
One way of proving this bound is to use a geometric argument on the
Gantt chart: $m\cdot C_{\max} = W + S_{\text{idle}}$
where the last term is the surface of idle periods (represented
in grey in figure~\ref{fig:simu}).

Depending on the scheduling problem (with or without precedence constraints,
unit tasks or not), there are several ways to compute $S_{\text{idle}}$.
With precedence constraints, $S_{\text{idle}} \le (m-1) \cdot D$.
For independent tasks, the results can be written as
$S_{\text{idle}} \le (m-1) \cdot p_{\max}$ where $p_{\max}$ is the maximum of the
processing times.
For unit independent tasks, it is straightforward to obtain an optimal algorithm 
where the load is evenly balanced. Thus $S_{\text{idle}} \le m-1$, \textit{i.e.}
at most one slot of the schedule contains idle times.

\subsection{Decentralized list scheduling}

When the list of ready tasks is distributed among the processors,
the analysis is more complex even in the elementary case of unit independent
tasks. In this case, the extra $S_{\text{idle}}$ term is induced
by the distributed nature of the problem. Processors can be idle
even when ready tasks are available.
Fig.~\ref{fig:simu} is an example of a schedule obtained using distributed list
scheduling which shows the complicated repartition of the idle times $S_{\text{idle}}$.

\begin{figure}[tb]
\centering
\includegraphics[width=\columnwidth]{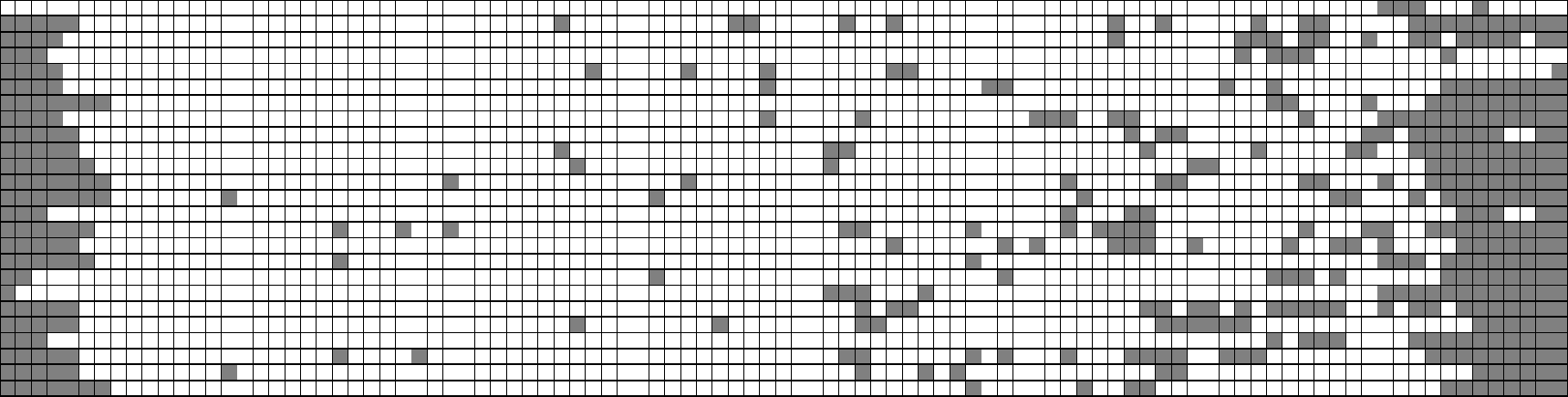}
\caption{A typical execution of $W=2000$ unit independent tasks on $m=25$ processors
using distributed list scheduling. Grey area represents idle times due to steal requests.
}
\label{fig:simu}
\end{figure}

\subsection{Model of the distributed list}

We now describe precisely the behavior of the distributed list.
Each processor $i$ maintains its own local queue $Q_i$ of tasks ready to execute.
At the beginning of the execution, ready tasks can be arbitrarily spread among the queues.
While $Q_i$ is not empty, processor $i$ picks a task and executes it. 
When this task has been executed, it is removed from the queue
and another one starts being processed.
When $Q_i$ is empty, processor $i$ sends a \emph{steal request}
to another processor $k$ chosen uniformly at random.
If $Q_k$ is empty or contains only one task (currently executed
by processor $k$), then the request fails and processor $i$ will send a new request
at the next time step. If $Q_k$ contains more than one task, then $i$ is given half
of the tasks and it will restart a normal execution at the next step. 
To model the contention on the queues, no more than one steal request per processor can succeed
in the same time slot. If several requests target the same processor, a random one
succeeds and all the others fail. This assumption will be relaxed in Section~\ref{ssec:coop}.
A steal request is said \textit{successful} if the target queue contains more than one task
and the request is not aborted due to contention. In all the other cases, the steal request
is said \textit{unsuccessful}.

This is a high level model of a distributed list but it accurately models 
the case of independent tasks and the WS algorithm of~\cite{ABP}.
We justify here some choices of this model.
There is no explicit communication cost since WS algorithms most often target
shared memory platforms.
In addition, a steal request is done in constant time independently of the
amount of tasks transfered.
This assumption is not restrictive as
the description of a large number of tasks can be very short. 
In the case of independent tasks, a whole subpart of an array of tasks
can be represented in a compact way by the range of the corresponding indices,
each cell containing the effective description of a task
(a STL transform in~\cite{JL_STL}).
For more general cases with precedence constraints,
it is usually enough to transfer a task which represents a part of the DAG. 
More details on the DAG model are provided in Section~\ref{sec:dag}.
Finally, there is no contention between a processor executing a task from its own
queue and a processor stealing in the same queue. Indeed, one can use 
queue data structures allowing these two operations to happen concurrently~\citep{Cilk}.

\subsection{Properties of the work}

At time $t$, let $w_i(t)$ represent the amount of work in queue $Q_i$ (\textit{cf.} Fig.~\ref{fig:workload_evolution}).
$w_i(t)$ may be defined as the sum of processing times of all tasks in $Q_i$
as in Section~\ref{sec:unit} but can differ as in Sections~\ref{sec:weighted} and \ref{sec:dag}.
In all cases, the definition of $w_i(t)$ satisfies the following properties.
\begin{enumerate}
\item When $w_i(t)>0$, processor $i$ is active and executes some work: $w_i(t+1)\leq w_i(t)$.
\item When $w_i(t)=0$, processor $i$ is idle and send a steal request to a random processor $k$.
If the steal request is successful, a certain amount of work is transfered from
processor $k$ to processor $i$ and we have $\max\{w_i(t+1),w_k(t+1)\} <  w_k(t)$.
\item The execution terminates when there is no more work in the system, \textit{i.e.} $\forall i, w_i(t)=0$.
\end{enumerate}

We also denote the total amount of work on all processors by $w(t) = \sum_{i=1}^m w_i(t)$
and the number of processors sending steal requests by $r_t \in [0,m-1]$. 
Notice that when $r_t=m$, all queues are empty and thus the execution is complete.

%No communication, targets shared memory platform with multicore.
%Implementation of a steal in unit time. Implicit or explicit tasks. High level, not at implementation level.

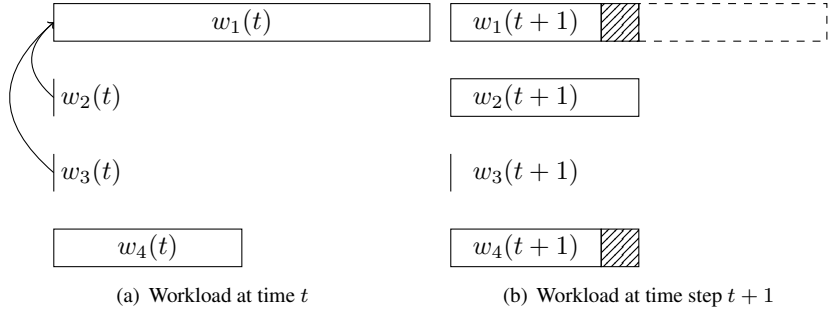
\begin{figure}[ht]
  \centering
  \subfigure[Workload at time $t$]{
    \begin{tikzpicture}[yscale=.5]
      \draw (0,0) rectangle (2.5,1); \node at (1.25,.5) {$w_4(t)$};
      \draw (0,2) rectangle (0,3); \node at (0.5,2.5) {$w_3(t)$};
      \draw (0,4) rectangle (0,5); \node at (0.5,4.5) {$w_2(t)$};
      \draw (0,6) rectangle (5,7); \node at (2.5,6.5) {$w_1(t)$};
      %%% Who is stealing who:
      \draw (0,2.5) edge[bend left,->] (0,6.5);
      \draw (0,4.5) edge[bend left,->] (0,6.5);
    \end{tikzpicture}
  }~~~~~~~
  \subfigure[Workload at time step $t+1$]{
    \begin{tikzpicture}[yscale=.5]
      \draw (0,0) rectangle (2.,1);  \node at (1, .5) {$w_4(t+1)$};
      \draw (0,2) rectangle (0,3);   \node at (1,2.5) {$w_3(t+1)$};
      \draw (0,4) rectangle (2.5,5); \node at (1,4.5) {$w_2(t+1)$};
      \draw (0,6) rectangle (2.,7);  \node at (1,6.5) {$w_1(t+1)$};
      %%% Stollen Work
      \tikzstyle{stollen}=[dashed];
      \draw[stollen] (2.5,6) rectangle (5,7); 
      %%% Exectuted 
      \tikzstyle{executed}=[pattern=north east lines];
      \draw[executed] (2.,6) rectangle (2.5,7); 
      %\node at (2.25,7.3) {\tiny Work executed by $1$};
      \draw[executed] (2.,0) rectangle (2.5,1); 
      %\node at (2.25,1.3) {\tiny Work executed by $4$};
    \end{tikzpicture}
  }
  
  \caption{Evolution of the workload of the different processors during a
    time step. At time $t$, processors $2$ and $3$ are idle and they both
    choose processor $1$ to steal from. At time $t+1$, only processor $2$
    succeed in stealing some of the work of processor $1$. The work is
    split between the two processors. Processors $1$ and $4$ both execute
    some work during this time step (represented by a shaded zone). }
  \label{fig:workload_evolution}
\end{figure}

%%% Local Variables: 
%%% mode: latex
%%% TeX-master: "paper"
%%% End: 
\section{Principle of the analysis and main theorem}
\label{sec:thm}

\newcommand\F{\mathscr{F}}
\newcommand\w{\mathbf{w}}
\newcommand\R{\mathbb{R}}

This section presents the principle of the analysis. The main result is
Theorem~\ref{thm:main} that gives bounds on the expectation of the
steal requests done by the schedule as well as the probability that the
number of work requests exceeds this bound. As a processor is either
executing or requesting work, the number of work requests plus the total
amount of tasks to be executed is equal to $m\cdot C_{\max}$, where
$C_{\max}$ is the total completion time. The makespan can be derived from
the total number of work requests:
\begin{equation}
  C_{\max}=\frac{W}{m}+\frac{R}{m}.
  \label{eq:Cmax=w+r}
\end{equation} 

The main idea of our analysis is to study the decrease of a potential
$\Phi_t$.% , instead of studying directly the number of processors that will
% run out of work and become idle. 
The potential $\Phi_t$ depends on the load on all processors at time $t$,
$\w(t)$. The precise definition of $\Phi_t$ varies depending on the
scenario (see Sections~\ref{sec:unit} to \ref{sec:dag}). For example, the
potential function used in Section~\ref{sec:unit} is
$\Phi_t=\sum_{i=1}^m (w_i(t)-w(t)/m)^2$.
For each scenario, we will prove that the diminution of the potential
during one time step depends on the number of steal requests, $r_t$. More
precisely, we will show that there exists a function $h:\{0\dots
m\}\to[0;1]$ such that the average value of the potential at time $t+1$
is less than $\Phi_t/h(r_t)$.

Using the expected diminution of the potential, we derive a bound on the
number of steal requests until $\Phi_t$ becomes less than one,
$R=\sum_{s=0}^{\tau-1} r_s$, where $\tau$ denotes the first time that
$\Phi_t$ is less than $1$. If all $r_t$ were equal to $r$ and the potential
decrease was deterministic, the number of time steps before $\Phi_t\le 1$
would be $\ceil{\log\Phi_0/\log h(r)}$ and the number of steal requests 
would be $r/\log h(r) \log\Phi_0$. As $r$ can vary between $1$ and $m$, the
worst case for this bound is $m\lambda\cdot\log\Phi_0$, where
$m\lambda=\max_{1\le r\le m} r/\log(h(r))$. 

The next theorem shows that number of steal requests is indeed bounded by
$m\lambda\log\Phi_0$ plus an additive term due to the stochastic nature
of $\Phi_t$. The fact that $\lambda$ corresponds to the worst choice of $r_t$
at each time step makes the bound looser than the real constant. However,
we show in Section~\ref{sec:simu} that the gap between the obtained bound
and the values obtained by simulation is small.  Moreover, the computation
of the constant $\lambda$ is simple and makes this analysis applicable
in several scenarios, such as the ones presented in Sections \ref{sec:unit}
to \ref{sec:dag}.

\medskip
In the following theorem and its proof, we use the following notations.
$\F_t$ denotes the knowledge of the system up to time $t$ (namely, the
filtration associated to the process $\w(t)$). For a random variable $X$,
the conditional expectation of $A$ knowing $\F_t$ is denoted
$\expect{X\mid\F_t}$. Finally, the notation $\mathbf{1}_A$ denotes the
random variable equal to $1$ if the event $A$ is true and $0$ otherwise. In
particular, this means that the probability of an event $A$ is
$\proba{A}=\expect{\mathbf{1}_A}$.
\begin{theorem} \label{thm:main}
  Assume that there exists a function $h:\{0\dots m\}\to[0,1]$ such that
  the potential satisfies:  
  $$\sexpect{\Phi_{t+1}\mid\F_t}\le h(r_t) \cdot \Phi_t.$$ 
  Let $\Phi_0$ denotes the potential at time $0$ and $\lambda$ be defined
  as:
  \begin{equation*}
    \lambda\bydef\max_{1\le r\le m} \frac{r}{-m\log_2(h(r))}
  \end{equation*}
  Let $\tau$ be the first time that $\Phi_t$ is less than $1$,
  $\tau\bydef\min\{t:\Phi_t<1\}$.  The number of steal requests until
  $\tau$, $R=\sum_{s=0}^{\tau-1} r_s$,  satisfies:
  \begin{itemize}
  \item[(i)] 
    $\displaystyle \proba{R \ge m \cdot \lambda \cdot \log_2 \Phi(0) + m +
      u} \le 2^{-u/(m\cdot\lambda)}$
  \item[(ii)]$\displaystyle \sexpect{R} \le m\cdot\lambda \cdot \log_2
    \Phi(0) + m(1+\frac{\lambda}{\ln2}).$
  \end{itemize}
\end{theorem}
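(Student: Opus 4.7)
The plan is to convert the multiplicative decrease of $\Phi_t$ into a non-negative supermartingale. By definition of $\lambda$, we have $-m\log_2 h(r)\ge r/\lambda$ for every $r\in\{1,\dots,m\}$, i.e.\ $h(r)\le 2^{-r/(m\lambda)}$. Writing $R_t=\sum_{s=0}^{t-1} r_s$, set
\[
Y_t \bydef \Phi_t\cdot 2^{R_t/(m\lambda)}.
\]
Then $\expect{Y_{t+1}\mid\F_t}=\expect{\Phi_{t+1}\mid\F_t}\cdot 2^{R_{t+1}/(m\lambda)}\le h(r_t)\,2^{r_t/(m\lambda)}\cdot Y_t\le Y_t$, so $(Y_t)$ is a non-negative supermartingale with $Y_0=\Phi_0$.

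For the tail bound (i), the delicate point is that at the stopping time $\tau$ defined in the statement, $\Phi_\tau$ can in principle be arbitrarily small, so applying Markov to $Y_\tau$ directly gives nothing. I would therefore introduce the auxiliary stopping time
\[
\sigma \bydef \tau \wedge \min\{t:R_t\ge T\},\qquad T\bydef m\lambda\log_2\Phi_0+u.
\]
On the event $\{\sigma<\tau\}$, by definition $\Phi_\sigma\ge 1$ and $R_\sigma\ge T$, so $Y_\sigma\ge 2^{T/(m\lambda)}$. Optional stopping applied to the non-negative supermartingale $Y$ yields $\expect{Y_\sigma}\le\Phi_0$, and Markov's inequality then gives
\[
\proba{\sigma<\tau}\le \Phi_0\cdot 2^{-T/(m\lambda)}=2^{-u/(m\lambda)}.
\]
To connect this with $R_\tau$, I would use that $r_t\le m-1$, hence $R_\tau\le R_{\tau-1}+m$. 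Thus $\{R_\tau\ge T+m\}$ implies $R_{\tau-1}\ge T$, so some $t<\tau$ already satisfies $R_t\ge T$, meaning $\sigma<\tau$. The inclusion $\{R\ge T+m\}\subseteq\{\sigma<\tau\}$ combined with the previous bound, with $T=m\lambda\log_2\Phi_0+u$, is precisely the claim (i).

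For (ii), I would integrate the tail bound:
\[
\expect{R}\le T_0 + \int_0^\infty \proba{R\ge T_0+u}\,du, \qquad T_0=m\lambda\log_2\Phi_0+m,
\]
and use (i) together with $\int_0^\infty 2^{-u/(m\lambda)}du = m\lambda/\ln 2$ to obtain $\expect{R}\le m\lambda\log_2\Phi_0+m(1+\lambda/\ln2)$.

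The main obstacle is the first paragraph's bookkeeping: because $\Phi_{t+1}$ is only controlled in expectation, there is no deterministic lower bound on $\Phi_\tau$, so one cannot stop at $\tau$ itself. The fix—stopping at $\sigma$ and then absorbing the one-step overshoot through the trivial bound $r_t\le m-1$—is what produces the additive $+m$ term in the statement; the rest of the argument is then a clean application of the optional stopping theorem and Markov's inequality to the exponential supermartingale $Y_t$.
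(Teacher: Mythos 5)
Your proof is correct, and it takes a genuinely different route from the paper's. The paper establishes the tail bound by a backward induction over a finite horizon $T$, proving the conditional estimate $\expect{\mathbf{1}_{R_t^T\ge m\lambda\log_2\Phi_t+m+u}\mid\F_t}\le 2^{-u/(m\lambda)}$ for all $t\le T$ with $\Phi_t\ge1$, and then passes to the limit $T\to\infty$ by monotone convergence. You instead package the drift hypothesis once and for all into the exponential supermartingale $Y_t=\Phi_t\,2^{R_t/(m\lambda)}$ (the inequality $h(r)\le 2^{-r/(m\lambda)}$ for $r\ge1$, together with $h(0)\le1$, is exactly the supermartingale property, using that $r_t$ is $\F_t$-measurable so the factor $2^{R_{t+1}/(m\lambda)}$ comes out of the conditional expectation), and then combine optional stopping at the auxiliary time $\sigma$ with Markov's inequality. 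At bottom the two arguments perform the same computation: the paper's induction step is a hand-rolled verification of the optional-stopping inequality for precisely this process, and the additive $+m$ arises in both proofs from the same one-step overshoot $r_t\le m$ (in the paper it appears as the observation that $m\lambda\log_2\Phi_t+m+u-r_t\ge0$ kills the $\Phi_{t+1}<1$ term). Your version buys conceptual transparency and avoids the finite-horizon truncation; the paper's version buys self-containedness, never invoking the optional stopping theorem. Part (ii) is handled identically in both. Two points you should make explicit in a write-up, neither of which affects correctness: since $\sigma$ need not be a.s.\ finite, use the form $\expect{Y_\sigma\mathbf{1}_{\sigma<\infty}}\le Y_0$ of optional stopping (Fatou applied to $Y_{\sigma\wedge n}$), which suffices because the event of interest $\{\sigma<\tau\}$ is contained in $\{\sigma<\infty\}$; and note that for $u\le0$ (or $\Phi_0<1$) the claim is trivial, an edge case the paper also only mentions in passing.
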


\begin{proof}
  %Let $\tau=\min\{t:\Phi_t<1\}$.  
  For two time steps $t\le
  T$, we call $R^T_t$ the number of steal requests between $t$ and $T$:
  \begin{equation*}
    R^T_t \bydef \sum_{s=t}^{\min \{\tau,T\}-1}r_s.
  \end{equation*}
  The number of steal requests until $\Phi_t<1$ is $R =
  \sum_{s=0}^{\tau-1}r_s = \lim_{T\to\infty} R^T_0$.
  
  We show by a backward induction on $t$ that for all $t\le T$:
  \begin{equation}
    \mathrm{if~}\Phi_t\ge1, \mathrm{then~}\forall u\in\R:
    \expect{\mathbf{1}_{R_t^T\ge
        m\cdot\lambda\cdot\log_2\Phi_t+m+u}\mid\F_t} 
    \le2^{-u/(m\cdot\lambda)}. 
    \label{eq:HR}
  \end{equation}
  
  For $t{=}T$, $R^T_T=0$ and $\expect{\mathbf{1}_{R_t^T\ge
      m\cdot\lambda\cdot\log_2\Phi_t+m+u}\mid\F_t}=0$. Thus, \eqref{eq:HR}
  is true for $t{=}T$.
  
  Assume that \eqref{eq:HR} holds for some $t+1\le T$ and suppose that
  $\Phi_t\ge1$. Let $u>0$ (if $u\le0$ ... ). Since $R_t^T=r_t+R_{t+1}^T$, the probability
  $\proba{R_t^T\ge m\cdot\lambda\cdot\log_2\Phi_t+m+u\mid\F_t}$ is equal to
  \begin{align}
    \expect{\mathbf{1}_{R_{t}^T\ge
        m\lambda\log_2\Phi_t+m+u}\mid\F_t}&=
    \expect{\mathbf{1}_{r_t+R_{t+1}^T\ge
        m\lambda\log_2\Phi_t+m+u}\mid\F_t}\\
    &=\expect{\mathbf{1}_{r_t+R_{t+1}^T\ge
        m\lambda\log_2\Phi_t+m+u}\mathbf{1}_{\Phi_{t+1}\ge1}\mid\F_t}
    \label{eq:probat+1}\\
    &~~+ \expect{\mathbf{1}_{r_t+R_{t+1}^T\ge 
        m\lambda\log_2\Phi_t+m+u}\mathbf{1}_{\Phi_{t+1}<1}\mid\F_t} 
    \label{eq:zero}
  \end{align}
  If $\Phi_{t+1}<1$, then $R_{t+1}^T=0$. Since $m\ge r_t$ and $\Phi_t\ge1$, 
  $m\lambda\log_2\Phi_t+m+u-r_t\ge0$. This shows that the term of
  Equation~\eqref{eq:zero} is equal to zero. \eqref{eq:probat+1} is the
  probability that $R^T_{t+1}$ is greater than
  \begin{equation*}
    m\lambda\log_2\Phi_t+m+u-r_t=m\lambda\log_2\Phi_{t+1}+m+(u-r_t-m\lambda\log(\Phi_{t+1}/\Phi_{t})
  \end{equation*}
  Therefore, using the induction hypothesis, \eqref{eq:probat+1} is equal 
  to 
  \begin{align*}
    \expect{\mathbf{1}_{R_{t+1}^T\ge
        m\lambda\log_2\Phi_t+m+u-r_t}\mathbf{1}_{\Phi_{t+1}>1}\mid\F_t}&=
    \expect{2^{-\frac{u-r_t-m\lambda\log(\Phi_{t+1}/\Phi_t)}{m\lambda}}\mathbf{1}_{\Phi_{t+1}>1}\mid\F_t}\\  
    &=2^{-\frac{u-r_t}{m\lambda}}\expect{\frac{\Phi_{t+1}}{\Phi_t}\mathbf{1}_{\Phi_{t+1}>1}\mid\F_t}\\
    &=2^{-\frac{u-r_t}{m\lambda}}h(r_t)\\
    &=2^{-\frac{u}{m\lambda}}2^{r_t/\lambda+\log_2(h(r_t))},
  \end{align*}
  where at the first line we used both the fact that for a random variable
  $X$, $\expect{X\mid\F_t}=\expect{\expect{X\mid\F_{t+1}}\mid\F_t}$ and the
  induction hypothesis. 
  
  If $r_t=0$, $2^{r_t/\lambda+\log_2(h(r_t))}=h(r_t)\le1$. Otherwise, by
  definition of $\lambda=\max_{1r\le m}r/-\log(h(r))$,
  $r_t/\lambda+\log_2(h(r_t))\le0$ and
  $2^{r_t/\lambda+\log_2(h(r_t))}\le1$. This shows that \eqref{eq:HR} holds
  for $t$. Therefore, by induction on $t$, this shows that \eqref{eq:HR}
  holds for $t=0$: for all $u\ge0$:
  \begin{equation*}
    \expect{\mathbf{1}_{R_0^T\ge
        m\cdot\lambda\cdot\log_2\Phi_t+m+u}\mid\F_0}
    \le2^{-u/(m\cdot\lambda)}
  \end{equation*}
  As $r_t\ge0$, the sequence $(R^T_0)_T$ is increasing and converges to
  $R$. Therefore, the sequence $\mathbf{1}_{R_0^T\ge
    m\cdot\lambda\cdot\log_2\Phi_0+m+u}$ is increasing in $T$ and converges
  to $\mathbf{1}_{R\ge m\cdot\lambda\cdot\log_2\Phi_0+m+u}$.  Thus, by
  Lebesgue's monotone convergence theorem, this shows that
  \begin{equation*}
    \proba{R\ge
      m\cdot\lambda\cdot\log_2\Phi_0+m+u}=\lim_{T\to\infty}
    \expect{\mathbf{1}_{R_0^T\ge m\cdot\lambda\cdot\log_2\Phi_0+m+u}}
    \le 2^{-\frac{u}{m\lambda}}.
  \end{equation*}
  
  The second part of the theorem \emph{(ii)} is a direct consequence of
  \emph{(i)}. Indeed,
  \begin{align*}
    \expect{R} &=\int_0^\infty \proba{R\ge u}du\\
    &\le m\cdot\lambda\cdot\log_2\Phi_0+m + \int_{0}^\infty 
    \proba{R\ge m\cdot\lambda\cdot\log_2\Phi_0+m+u} du \\ 
    &\le m\cdot\lambda\cdot\log_2\Phi_0+m +
    \int_{0}^\infty2^{-\frac{u}{m\lambda}}du\\
    &\le m\cdot\lambda\cdot\log_2\Phi_0+m(1+\frac{\lambda}{\ln 2}).
  \end{align*}
  \qed
\end{proof}

%%% Local Variables: 
%%% mode: latex
%%% TeX-master: "paper"
%%% End: 
\section{Unit independent tasks}
\label{sec:unit}

We apply the analysis presented in the previous section for the case of
independent unit tasks.  In this case, each processor $i$ maintains a local
queue $Q_i$ of tasks to execute.  At every time slot, if the local queue
$Q_i$ is not empty, processor $i$ picks a task and executes it. When $Q_i$
is empty, processor $i$ sends a steal request to a random processor $j$. If
$Q_j$ is empty or contains only one task (currently executed by processor
$j$), then the request fails and processor $i$ will have to send a new
request at the next slot. If $Q_j$ contains more than one task, then $i$ is
given half of the tasks (after that the task executed at time $t$ by
processor $j$ has been removed from $Q_j$).  The amount of work on
processor $i$ at time $t$, $w_i(t)$, is the number of tasks in $Q_i(t)$.
At the beginning of the execution, $w(0)=W$ and tasks can be arbitrarily
spread among the queues.

\subsection{Potential function and expected decrease}

Applying the method presented in Section~\ref{sec:thm}, the first step of
the analysis is to define the potential function and compute the potential
decrease when a steal occurs. For this example, $\Phi(t)$ is defined by:
\begin{equation*}
  \Phi(t) = \sum_{i=1}^{m} \lp w_i(t)-\frac{w(t)}{m}\rp^2 = 
  \sum_{i=1}^{m} w_i(t)^2-\frac{w^2(t)}{m}.
\end{equation*}
This potential represents the load unbalance in the system. If all queues
have the same load $w_i(t)=w(t)/m$, then $\Phi(t)=0$.  $\Phi(t)\le 1$ implies
that there is at most one processor with at most one more task than the
others. In that case, there will be no steal until there is just one
processor with $1$ task and all others idle. Moreover, the potential
function is maximal when all the work is concentrated on a single
queue. That is $\Phi(t) \le w(t)^2 -w(t)^2/m \le (1-1/m)w^2(t)$.

% Assume that at time $t$, the queue $i$ has $w_i(t)\ge 1$ tasks.
Three events contribute to a variation of potential: successful steals,
tasks execution and decrease of $w^2(t)/m$.
\begin{enumerate}
\item If the queue $i$ has $w_i(t)\ge1$ tasks and it receives one or more
  steal requests, it chooses a processor $j$ among the thieves. At time
  $t+1$, $i$ has executed one task and the rest of the work is split
  between $i$ and $j$. Therefore,
  \begin{equation*}
    w_i(t+1)=\ceil{(w_i(t)-1)/2} \quad\mathrm{and}\quad
    w_j(t+1)=\floor{(w_i(t)-1)/2}.
  \end{equation*}
  Thus, we have:
  \begin{equation*}
    w_i(t+1)^2+w_j(t+1)^2 =
    \ceil{(w_i(t){-}1)/2}^2+\floor{(w_i(t){-}1)/2}^2 \le w_i(t)^2/2-w_i(t)+1. 
  \end{equation*}
  Therefore, this generates a difference of potential of
  \begin{equation}
    %\delta^k_i(t) = 
    \delta_i(t) \ge w_i(t)^2/2+w_i(t)-1.
    \label{eq:delta_i^k}
  \end{equation}
\item If $i$ has $w_i(t)\ge1$ tasks and receives zero steal requests, it
  potential goes from $w_i(t)^2$ to $(w_i(t)-1)^2$, generating a potential
  decrease of $2w_i(t)-1$.
\item As there are $m-r_t$ active processors, $(\sum_{i=1}^mw_i(t))^2/m$ goes
  from $w(t)^2/m$ to $w(t+1)^2=(w(t)-m+r)^2/m$, generating a potential
  increase of $2(m-r_t)w(t)/m-(m-r_t)^2/m$. 
\end{enumerate}
Recall that at time $t$, there are $r_t$ processors that send steal
requests. A processor $i$ receives zero steal requests if the $r_t$ thieves
choose another processor. Each of these events is independent and happens
with probability $(m-2)/(m-1)$. Therefore, the probability for the
processor to receive one or more steal requests is $q(r_t)$ where
\begin{equation*}
  q(r_t) = 1 - \lp 1-\frac{1}{m-1}\rp^{r_t}. 
\end{equation*}
If $\Phi_t{=}\Phi$ and $r_t{=}r$, by summing the expected decrease on each
active processor $\delta_i$, the expected potential decrease is greater
than:
\begin{eqnarray}
  \sum_{i/w_i(t){>}0} \left[ q(r) \underbrace{\Bigl(
    \frac{w_i(t)^2}{2}+w_i(t){-}1\Bigr)}_{\mathrm{\ge\delta_i}}+(1-q(r))(2w_i(t){-}1)
\right]
  - 2w(t)\frac{m-r}{m}+\frac{(m-r)^2}{m} \nonumber\\
  % =\sum_{i/w_i(t)>0}\left[\frac{q(r)}{2}w_i(t)^2
  %   -q(r)w_i(t)+2w_i(t)-1\right]
  % - 2w(t)\frac{m-r}{m}+\frac{(m-r)^2}{m}
  % \nonumber\\
  =\left[\sum_{i/w_i(t)>0}\frac{q(r)}{2}w_i(t)^2\right] - q(r)w(t) +2w(t) - (m-r)
  - 2w(t)\frac{m-r}{m}+\frac{(m-r)^2}{m}. \nonumber
\end{eqnarray}
Using that $2w(t)-2w(t)\frac{m-r}{m}=2w(t)\frac{r}{m}$, that
$-(m-r)+\frac{(m-r)^2}{m}=-(m-r)\frac{r}{m}$ and that $\sum w_i(t)^2 =
\Phi+w(t)^2$, this equals:
\begin{align*}
  &\frac{q(r)}{2}\Phi + \frac{q(r)}{2}\frac{w(t)^2}{m}
  -q(r)w(t)+2w(t)\frac{r}{m}-(m-r)\frac{r}m\\ 
  &\quad=\frac{q(r)}{2}\Phi+\frac{q(r)}{2}\frac{w(t)^2}{m} -q(r)w(t) +
  \frac{r}{m}(2w(t)-m+r)\\ 
  &\quad=\frac{q(r)}{2}\Phi+\frac{q(r)w(t)}{2}\lp\frac{w(t)}{m}-2+\frac{2r}{mq(r)}\rp
  +\frac{r}{m}\lp w(t)-m+r\rp.  
\end{align*}
By concavity of $x\mapsto(1-(1-x)^r)$, $(1-(1-x)^r)\le r\cdot x$. This
shows that $q(r)=1-(1-\frac1{m-1})^r\le 
r/(m-1)$. Thus, $r/q(r)\ge m-1$. Moreover, as $m-r$ is the number of active
processors, $w\ge m-r$ (each processor has at least one task). This shows
that the expected decrease of potential is greater than:
\begin{align*}
  \frac{q(r)}{2}\Phi+\frac{q(r)w(t)}{2}\lp \frac{w(t)}m - 2
  +2\frac{m-1}{m}\rp
  =\frac{q(r)}{2}\Phi+\frac{q(r)w(t)}{2m}(w(t)-2). 
\end{align*}
If $w(t)\ge2$, then the expected decrease of potential is greater than
$q(r_t)\Phi_t/2$. If $w(t)<2$, this means that $w(t)=1$ and $w(t+1)=0$ and
therefore $\Phi_{t+1}=0$. Thus, for all $t$:
\begin{equation}
  \expect{\Phi_{t+1}\mid\F_t}\le \Bigl(1-\frac{q(r_t)}{2}\Bigr)\cdot \Phi_t.
  \label{eq:phi_t+1_unit}
\end{equation}

\subsection{Bound on the makespan}

Using Theorem~\ref{thm:main} of the previous section, we can solve
equation \eqref{eq:phi_t+1_unit} and conclude the analysis.
\begin{theorem}
  \label{thm:unit}
  Let $C_{\max}$ be the makespan of $W=n$ unit independent tasks scheduled by
  DLS and $\Phi_0\bydef\sum_{i}(w_i-\frac{W}{m})^2$ the potential when the
  schedule starts.  Then:
  \begin{itemize}
  \vspace{-1ex}
  \item[(i)]
    ~~~~~$\displaystyle\expect{C_{\max}} \le \frac{W}{m} +
    \frac{1}{1-\log_2(1+\frac{1}{e})}\cdot\Bigl(\log_2 \Phi_0+\frac{1}{\ln2}\Bigr) + 1 $ 
  \item[(ii)]
    ~~~~~$\displaystyle\proba{C_{\max} \ge \frac{W}{m} +
      \frac{1}{1-\log_2(1+\frac{1}{e})}\cdot\lp\log_2
      \Phi_0+\log_2\frac{1}\epsilon\rp + 1} \le \epsilon$
  \end{itemize}
  In particular:
  \begin{itemize}
  \item[(iii)] 
    ~~~~~$\displaystyle\expect{C_{\max}} \le \frac{W}{m} +
    \frac{2}{1-\log_2(1+\frac{1}{e})}\cdot\Bigl(\log_2 W+\frac{1}{2\ln2}\Bigr) + 1$
  \end{itemize}
  These bounds are optimal up to a constant factor in $\log_2 W$.
\end{theorem}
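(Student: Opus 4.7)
The plan is to apply Theorem~\ref{thm:main} directly to the bound $\expect{\Phi_{t+1}\mid\F_t}\le(1-q(r_t)/2)\Phi_t$ derived just before the statement, taking $h(r)=1-q(r)/2=\frac{1}{2}\bigl(1+(1-\frac{1}{m-1})^r\bigr)$. With this choice of $h$, the entire proof reduces to bounding the constant $\lambda=\max_{1\le r\le m}r/(-m\log_2 h(r))$ by $1/(1-\log_2(1+1/e))$ and then invoking the identity $C_{\max}=W/m+R/m$ from Equation~\eqref{eq:Cmax=w+r}.

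To bound $\lambda$, I would first use the elementary inequality $(1-\frac{1}{m-1})^r\le e^{-r/(m-1)}$ to get $h(r)\le(1+e^{-r/(m-1)})/2$, hence $-\log_2 h(r)\ge 1-\log_2(1+e^{-r/(m-1)})$. Substituting $x=r/(m-1)$ and setting $f(x)=x/(1-\log_2(1+e^{-x}))$, this yields $r/(-m\log_2 h(r))\le\frac{m-1}{m}f(x)$. The key analytic step is showing that $f$ is non-decreasing on $(0,\infty)$: writing $g(x)=1-\log_2(1+e^{-x})$, one sees that $f'(x)\ge 0$ is equivalent to $F(x):=g(x)-xg'(x)\ge 0$; computing gives $F(0)=0$ and $F'(x)=-xg''(x)$, which is positive for $x>0$ because $g$ is concave. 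Therefore $f$ is maximized at the largest admissible value $x=m/(m-1)$, giving $\lambda\le 1/(1-\log_2(1+e^{-m/(m-1)}))\le 1/(1-\log_2(1+1/e))$ since $m/(m-1)\ge 1$. I expect this monotonicity of $f$ to be the main obstacle, because it is what pins down the specific constant appearing in the bound.

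With $\lambda$ bounded, part (i) follows from Theorem~\ref{thm:main}(ii) via $\expect{C_{\max}}=W/m+\expect{R}/m$, modulo bookkeeping of the $O(m)$ extra steal requests occurring between the stopping time $\tau$ and the actual completion time: a direct case analysis shows that at $\tau$ the loads must all lie in some $\{k,k+1\}$ with either $0$, $1$, $m-1$ or $m$ processors at level $k+1$, so these residual steals contribute only a bounded additive term that is absorbed in the $+1$. Part (ii) follows by applying Theorem~\ref{thm:main}(i) with $u=m\lambda\log_2(1/\epsilon)$, which makes $2^{-u/(m\lambda)}=\epsilon$. Part (iii) follows from the remark already in the paper that $\Phi_0$ is maximized when all work is concentrated on one queue, giving $\Phi_0\le W^2(1-1/m)\le W^2$ and hence $\log_2\Phi_0\le 2\log_2 W$. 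Finally, the matching lower bound up to a constant factor comes from an independent argument: initializing all $W$ tasks on a single processor forces an $\Omega(\log W)$ splitting phase before the work can be spread over the $m$ processors, since a single successful steal at best halves the load of the heaviest queue.
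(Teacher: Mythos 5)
Your route is the paper's route: feed the drift inequality $\expect{\Phi_{t+1}\mid\F_t}\le(1-q(r_t)/2)\,\Phi_t$ into Theorem~\ref{thm:main} with $h(r)=1-q(r)/2$, bound $\lambda$, convert $\expect{R}$ into a makespan via $m\cdot C_{\max}=W+R$, and use the single-loaded-queue instance for the lower bound. Your treatment of $\lambda$ is in fact more careful than the paper's: the paper merely asserts that $r/(-m\log_2 h(r))$ is ``decreasing in $r$'' with extremum at $r=1$, which is backwards (the ratio is increasing and its maximum sits at $r=m-1$, as the paper itself states in Section~\ref{ssec:new-potential}); your substitution $x=r/(m-1)$ together with the monotonicity of $f(x)=x/(1-\log_2(1+e^{-x}))$, proved via $F(x)=g(x)-xg'(x)$, $F(0)=0$, $F'(x)=-xg''(x)\ge0$ by concavity of $g$, is correct and supplies the justification the paper omits. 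Parts (ii) (take $u=m\lambda\log_2(1/\epsilon)$), (iii) ($\Phi_0\le W^2$), and the lower bound all match; for the optimality claim you should make explicit that the doubling argument yields $\Omega(\log_2 W)$ only on an instance with $m=\Theta(W)$, e.g.\ the paper's $W=2^{k+1}$, $m=2^k$.

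The one genuine slip is the endgame. You keep the raw potential and claim the steal requests issued after $\tau=\min\{t:\Phi_t<1\}$ are ``absorbed in the $+1$''. They are not: the $+1$ in (i) is already fully spent, being exactly the $m/m$ coming from the additive $m$ in Theorem~\ref{thm:main}(ii), with the $\lambda/\ln 2$ grouped into the parenthesis; there is zero slack left. Your own case analysis shows that in the configuration with one queue at $k+1$ and the rest at $k$, the system drifts without steals down to ``one task left on one processor'', at which point $m-1$ idle processors each issue a failing request in the final slot; this adds $(m-1)/m$ to $R/m$, so as written you prove (i) and (ii) with essentially $+2$ in place of $+1$. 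The paper's device is to rescale the potential: since the minimal nonzero value of $\Phi_t$ is $1-1/m$ (loads in $\{k,k+1\}$ with a single outlier), applying Theorem~\ref{thm:main} to $\Phi'_t=\Phi_t/(1-1/m)$ makes $\Phi'_\tau<1$ equivalent to $\Phi_\tau=0$, i.e.\ perfectly balanced loads, after which no steal request is ever issued; the cost is only a harmless $\lambda\log_2\lp m/(m-1)\rp$ inside the logarithm. This is a constant-level discrepancy, not a structural flaw, but the sentence ``absorbed in the $+1$'' is false as stated and needs the rescaling (or an explicit $+(m-1)/m$) to recover the theorem's constants.
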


\begin{proof}
  Equation~\eqref{eq:phi_t+1_unit} shows that $\sexpect{\Phi_{t+1}|\F_t}\le
  g(r_t)\Phi_t$ with $g(r)=1-q(r)/2$. Defining
  $\Phi'_t=\Phi_t/(1-1/(m-1))$, the potential function $\Phi'_t$ also
  satisfies \eqref{eq:phi_t+1_unit}. Therefore, $\Phi'_t$ satisfies the
  conditions of Theorem~\ref{thm:main}. This shows that the number of work
  requests $R$ until $\Phi'_t<1$ satisfies 
  \begin{equation*}
    \sexpect{R} \le m\cdot\lambda\log_2(\Phi_0)+m\Bigl(1+\frac{\lambda}{\ln2}\Bigr),
  \end{equation*}
  with $\lambda=\max_{1\le r\le m-1}
  r/(-m\log_2h(r))$.
%  We show in appendix~\ref{apx:lambda_unit}
One can show
  that $r/(-m\log_2h(r))$ is decreasing in $r$.
  Thus its minimum is attained for
  $r=1$. This shows that $\lambda\le 1/(1-\log_2(1+\frac{1}{e}))$.
  
  The minimal non zero-value for $\Phi_t$ is when one processor has one
  task and the others zero. In that case, $\Phi_t=1-1/(m-1)$. Therefore,
  when $\Phi'_t<1$, this means that $\Phi_t=0$ and the schedule is
  finished. 
  
  As pointed out in Equation~\eqref{eq:Cmax=w+r}, at each time step of the
  schedule, a processor is either computing one task or stealing
  work. Thus, the number of steal requests plus the number of tasks to be
  executed is equal to $m\cdot C_{\max}$, \emph{i.e.}  $m\cdot C_{\max} =
  W+R$. This shows that
  \begin{equation*}
    \displaystyle \sexpect{C_{\max}} \le \frac{W}{m} +
    \frac{1}{1-\log_2(1+\frac{1}{e})}\cdot\lp\log_2 \Phi_0+\frac{1}{\ln2}\rp + 1. 
  \end{equation*}
  
  This concludes the proof of \emph{(i)}. 
  The proof of the \emph{(i)} applies \textit{mutatis mutandis} to prove
  the bound in probability \emph{(ii)} using
  Theorem~\ref{thm:main}~\emph{(ii)}.

  We now give a lower bound for this problem. Consider $W=2^{k+1}$ tasks
  and $m=2^k$ processors, all the tasks being on the same processor at the
  beginning.  In the best case, all steal requests target processors with
  highest loads. In this case the makespan is $C_{\max} = k+2$: the first
  $k=\log_2 m$ steps for each processor to get some work; one step where
  all processors are active; and one last step where only one processor is
  active.  In that case, $C_{\max} \ge \frac{W}{m} + \log_2 W - 1$.
  \qed
\end{proof}

This theorem shows that the factor before $\log_2 W$ is bounded by $1$ and
$2/(1-\log_2(1+1/e))< 3.65$. Simulations reported in Section~\ref{sec:simu}
seem to indicate that the factor of $\log_2 W$ is slightly less than $3.65$.
This shows that the constants obtained by our analysis are sharp.

\subsection{Influence of the initial repartition of tasks}

In the worst case, all tasks are in the same queue at the beginning of the
execution and $\Phi_0=(W-W/m)^2\le W^2$. This leads to a bound 
on the number of work requests in $3.65 m \log_2 W$
(see the item \emph{(iii)} of Theorem~\ref{thm:unit}). However, using bounds
in terms of $\Phi_0$, our
analysis is able to capture the difference for the number of work requests
if the initial repartition is more balanced.  One can show that a
more balanced initial repartition ($\Phi_0 \ll W^2$)
leads to fewer steal requests on average.

Suppose for example that the initial repartition is a balls-and-bins
assignment: each tasks is assigned to a processor at random. 
In this case, the initial number of tasks in queue $i$, $w_i(0)$, follows
a binomial distribution $\mathcal{B}(W,1/m)$.
The expected value of $\Phi_0$ is:
\begin{equation*}
  \displaystyle \expect{\Phi_0} = \sum_i \expect{w_i^2} - \frac{W^2}{m} 
  = \sum_i \Bigl(\var{w_i}+\expect{w_i}^2\Bigr) -\frac{W^2}{m}
  = \Bigl(1-\frac1m\Bigr)\cdot W
\end{equation*}
Since the number of work requests
is proportional to $\log_2 \Phi_0$, this initial repartition of tasks
reduces the number of steal requests by a factor of $2$ on average.
This leads to a better bound on the makespan in $W/m+1.83\log_2W+3.63$.

\section{Going further on the unit tasks model}
\label{sec:further}

In this section, we provide two different analysis of the model of unit
tasks of the previous section.  We first show how the use of a different
potential function $\Phi_t=\sum_iw_i(t)^\nu$ (for some $\nu>1$) leads to a
better bound on the number of work requests. Then we show how
cooperation among thieves leads to a reduction of the bound on the number
of work requests by $12\%$. The later is corroborated by our simulation
that shows a decrease on the number of work requests between $10\%$ and
$15\%$.

\subsection{Improving the analysis by changing the potential function}
\label{ssec:new-potential}

We consider the same model of unitary tasks as in
Section~\ref{sec:unit}. The potential function of our system is defined as
\begin{equation*}
  \Phi_t = \sum_{i=1}^mw_i(t)^\nu,
\end{equation*}
where $\nu>1$ is a constant factor. 

When an idle processor steals a processor with $w_i(t)$ tasks, the
potential decreases by
\begin{align*}
  \delta_i = w_i(t)^\nu-\ceil{\frac{w_i(t)-1}{2}}^\nu +
  \floor{\frac{w_i(t)-1}{2}}^\nu
  &\ge w_i(t)^\nu-\floor{\frac{w_i(t)}{2}}^\nu +
  \floor{\frac{w_i(t)}{2}}^\nu \\
  &\ge\lp1-2^{1-\nu}\rp w_i(t)^\nu.  
\end{align*}
This shows that the expected value of the potential at time $t+1$ is 
\begin{equation*}
  \expect{\Phi_{t+1}}\le(1-q(r)(1-2^{1-\nu})) \cdot \Phi_t. 
\end{equation*}
where $q(r)$ is the probability for a processor to receive at least one work
request when $r$ processors are stealing,
$q(r)=1-\lp1-\frac{1}{m-1}\rp^{r}$.

Following the analysis of the previous part, and as $\Phi_0\le W^\nu$ the
expected makespan is bounded by:
\begin{equation*}
  \frac{W}{m}+\lambda(\nu)\cdot\lp\log\Phi_0+1+\frac1{\ln2}\rp \le 
  \frac{W}{m}+\nu\lambda(\nu)\cdot\lp\log W+1+\frac1{\ln2}\rp,
\end{equation*}
where $\lambda(\nu)$ is a constant depending on $\nu$ equal to:
\begin{equation}
  \lambda(\nu)\bydef\max_r\Bigl\{\frac{r}{-\log_2(1-q(r)(1-2^{1-\nu}))}\Bigr\}
  \label{eq:max_nu}
\end{equation}
As for $\nu=2$ of Section~\ref{sec:unit}, it can be shown the maximum of
Equation~\ref{eq:max_nu} is attained for $r=m-1$. 
% \begin{align*}
%   \nu\lambda(\nu)&=\nu\frac{m-1}{-\log_2(1-q(m-1)(1-2^{1-\nu}))}\\
%   &=\nu\frac{m-1}{-\log_2(1-(1-\lp1-\frac{1}{m-1}\rp^{m-1})(1-2^{1-\nu}))}. 
% \end{align*}

The constant factor in front of $\log W$ is $\nu\lambda(\nu)$.
Numerically, the minimum of $\nu\lambda(\nu)$ is for $\nu\approx2.94$ and
is less than $3.24$.
\begin{theorem}
  \label{thm:unit_nu}
  Let $C_{\max}$ be the makespan of $W=n$ unit independent tasks scheduled DLS.
Then:
  \begin{equation*}
    \displaystyle\expect{C_{\max}} \le \frac{W}{m} +
    3.24\cdot\Bigl(\log_2 W+\frac{1}{2\ln2}\Bigr) + 1
  \end{equation*}
\end{theorem}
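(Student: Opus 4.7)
The plan is to instantiate the generic framework of Theorem~\ref{thm:main} with the new potential $\Phi_t = \sum_{i=1}^m w_i(t)^\nu$, treating $\nu>1$ as a parameter to be tuned at the very end. Almost all the ingredients are already produced in the discussion preceding the theorem statement; what remains is to assemble them cleanly, identify the worst-case $r$, and perform a one-variable numerical optimization.

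First I would formalize the one-step drift inequality $\expect{\Phi_{t+1}\mid\F_t}\le h(r_t)\Phi_t$ with $h(r)=1-q(r)(1-2^{1-\nu})$ and $q(r)=1-(1-\tfrac{1}{m-1})^r$. The key per-queue estimate is the one already displayed in Section~\ref{ssec:new-potential}: if queue $i$ is stolen, its contribution to $\Phi_t$ drops from $w_i(t)^\nu$ to at most $2\cdot(w_i(t)/2)^\nu=2^{1-\nu}w_i(t)^\nu$, a decrease of at least $(1-2^{1-\nu})w_i(t)^\nu$. Unserved queues and executed tasks only decrease $\Phi_t$ further (since $x\mapsto x^\nu$ is increasing on $\R_{+}$), so bounding $\expect{\Phi_{t+1}\mid\F_t}$ by accounting only for successful steals is legitimate. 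Summing $q(r_t)(1-2^{1-\nu})w_i(t)^\nu$ over the non-empty queues yields exactly $h(r_t)\Phi_t$.

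Next I would invoke Theorem~\ref{thm:main}(ii). Since $\Phi_0=\sum_i w_i(0)^\nu\le W^\nu$, it gives
\begin{equation*}
  \expect{R}\le m\lambda(\nu)\log_2\Phi_0+m\Bigl(1+\frac{\lambda(\nu)}{\ln 2}\Bigr)
  \le m\nu\lambda(\nu)\log_2 W+m\Bigl(1+\frac{\lambda(\nu)}{\ln 2}\Bigr),
\end{equation*}
and combining with $C_{\max}=W/m+R/m$ from \eqref{eq:Cmax=w+r} produces
$\expect{C_{\max}}\le W/m+\nu\lambda(\nu)\log_2 W+1+\lambda(\nu)/\ln 2$. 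To locate $\lambda(\nu)=\max_r r/(-m\log_2 h(r))$, I would show by the same monotonicity argument used for $\nu=2$ that the maximum is attained at $r=m-1$, where $q(m-1)\to 1-1/e$ as $m\to\infty$; this gives the closed form
\begin{equation*}
  \lambda(\nu)\;\le\;\frac{1}{-\log_2\!\bigl(1-(1-1/e)(1-2^{1-\nu})\bigr)}.
\end{equation*}

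Finally I would minimize $\nu\mapsto\nu\lambda(\nu)$ numerically. A direct calculus computation (or a short table of values) shows the minimum is attained near $\nu^\star\approx 2.94$, and at this value $\nu^\star\lambda(\nu^\star)<3.24$ and $\lambda(\nu^\star)/\ln 2<3.24/(2\ln 2)$, so both the coefficient of $\log_2 W$ and the additive constant in the bound are dominated by the corresponding terms of the statement, yielding the claimed inequality. The only non-routine step is justifying that the $r=m-1$ endpoint actually maximizes $r\mapsto r/(-m\log_2 h(r))$ uniformly in $\nu>1$; this reduces to checking that $r\mapsto r/\log(1/(1-c\,q(r)))$ is monotone in $r$ for every $c=1-2^{1-\nu}\in(0,1/2)$, which follows from the concavity of $x\mapsto 1-(1-x)^r$ and a straightforward derivative sign analysis. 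The remaining minimization over $\nu$ is elementary but must be done numerically, which is where the specific constant $3.24$ comes from.
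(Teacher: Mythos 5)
Your proposal follows essentially the same route as the paper's: the same potential $\Phi_t=\sum_i w_i(t)^\nu$, the same per-steal decrease bound $(1-2^{1-\nu})w_i(t)^\nu$, the same drift $h(r)=1-q(r)(1-2^{1-\nu})$ fed into Theorem~\ref{thm:main}, the worst case located at $r=m-1$, and the numerical minimization of $\nu\lambda(\nu)$ near $\nu\approx 2.94$ yielding the constant $3.24$. One small slip: $c=1-2^{1-\nu}$ ranges over $(0,1)$ for $\nu>1$, not $(0,1/2)$ --- at the optimal $\nu\approx 2.94$ one has $c\approx 0.74$ --- so your monotonicity check for $r\mapsto r/(-\log_2(1-c\,q(r)))$ must cover that larger range (the paper asserts this maximization at $r=m-1$ without proof, so you are at the same level of rigor there).
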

In Section~\ref{sec:unit}, we have shown that the makespan was bounded by
\begin{equation*}
  \frac{W}{m} +2\lambda(2)\cdot\Bigl(\log_2 \Phi_0+\frac{1}{\ln2}\Bigr) + 1 \le
    \frac{W}{m}+3.65\cdot\Bigl(\log_2W+\frac{1}{2\ln2}\Bigr)+1.
\end{equation*}
Theorem~\ref{thm:unit_nu} improves the constant factor in front of
$\log_2 W$.
However, we loose the information of the initial repartition of tasks
$\Phi_0$.
%\marginpar{A dire mieux!}

\subsection{Cooperation among thieves}
\label{ssec:coop}

In this section, we modify the protocol for managing the distributed list.
Previously, when $k>1$ steal requests were sent on the same processor, only
one of them could be served due to contention on the list. We now
allow the $k$ requests to be served in unit time.
This model has been implemented in the middleware Kaapi~\citep{Kaapi}.
%which allow to build parallel application scheduled by work stealing.
When $k$ steal requests target the same processor, the work is divided into $k+1$
pieces. In practice, allowing concurrent thieves increase the cost of a steal
request but we neglect this additional cost here. 
We assume that the $k$ concurrent steal requests can be served in 
unit time. We study the influence of this new protocol
on the number of steal requests in the case of unit independent tasks.

%\subsubsection{Potential function and expected decrease}

We define the potential of the system at time $t$ to be:
\begin{equation*}
  \Phi(t)=\sum_{i=1}^m \Bigl( w_i(t)^\nu-w_i(t) \Bigr).
\end{equation*}
Let us first compute the decrease of the potential when processor $i$
receives $k\ge1$ steal requests. If $w_i(t)>0$, it can be written
$w_i(t)=(k+1)q+b$ with $0\le b<k+1$. We neglect the decrease of potential
due to the execution tasks ($\nu>1$ implies that execution of tasks
decreases the potential). 

After one time step and $k$ steal requests, the work will be divided into $r$
parts with $q+1$ tasks and $k+1-r$ parts with $q$ tasks. $\sum_i w_i(t)$
does not vary during the stealing phase. Therefore, the difference of
potential due to these $k$ work requests is
\begin{equation*}
  \delta^k_i=((k+1)q+b)^\nu - b(q+1)^\nu - (k+1-b)q^\nu.
\end{equation*}
Let us denote $\alpha\bydef b/(k+1)\in[0;1)$ and let $f(x)=(x+\alpha)^\nu +
(1-2^{1-\nu})(x+\alpha)-(1-\alpha)x^\nu-\alpha (x+1)^\nu$. The first
derivative of $f$ is
$f'(x)=\nu(x+\alpha)^{\nu-1}+(1-2^{1-\nu})-\nu(1-\alpha)x^{\nu-1}-\alpha(x+1)^{\nu-1}$
and the derivative of $f'$ is $f''(x)=\nu(1-\nu)( (x+\alpha)^{\nu-2} -
(1-\alpha) x^{\nu-2}-\alpha (x+1)^{\nu-2}$. As $\nu<3$, the function
$x\mapsto x^{\nu-2}$ is concave which implies than $f''(x)\ge0$. Therefore,
$f'$ is increasing. Moreover,
$f'(0)=\nu(\alpha^{\nu-1}-\alpha)+(1-2^{1-\nu})\ge0$.
This shows that for all $x$, $f'(x)\ge0$ and that
$f$ is increasing. The value of $f$ in $0$ is
$f(0)=\alpha^\nu-(1-2^{1-\nu})\alpha-\alpha=\alpha^\nu(1-(2\alpha)^{1-\nu})\ge0$
which implies that for all $x$, $f(x)\ge0$.

Recall that $w_i(t)=(k+1)q+b$ and $\alpha=b/(k+1)$. Using the notation $f$
and the fact that $(k+1)^{1-\nu}\le2^{1-\nu}$, the decrease of potential
$\delta^k_i$ can be written
\begin{align}
  \delta^k_i &= (1-(k+1)^{1-\nu}) \cdot (w_i(t)^\nu-w_i(t))+(k+1) \cdot f(q) \nonumber \\
& \ge(1-(k+1)^{1-\nu}) \cdot (w_i(t)^\nu-w_i(t)). 
  \label{eq:1/k+1}
\end{align}

Let $q_k(r)$ be the probability for a processor to receive $k$ work
requests when $r$ processors are stealing. $q_k(r)$ is equal
to:
\begin{equation*}
  q_k(r)=\lp\begin{array}{c}r\\k\end{array}\rp
  \frac{1}{(m-1)^k}\lp\frac{m-2}{m-1}\rp^{r-k}
\end{equation*} 
The expected decrease of the potential caused by the steals on processor
$i$ is equal to $\sum_{k=0}^{r}\delta^k_iq_k(r)$. Using
equation~\eqref{eq:1/k+1}, we can bound the expected potential at time
$t+1$ by
\begin{align*}
\expect{\Phi_t - \Phi_{t+1} \mid\F_t}
&= \sum_{i=0}^m \sum_{k=0}^{r} \delta^k_i \cdot q_k(r) \\
\expect{\Phi_{t+1}\mid\F_t}
&\le \Bigl( 1-\sum_{k=0}^r (1-(k+1)^{1-\nu})
\cdot q_k(r) \Bigr) \cdot \Phi_t 
% \le \Bigl(\sum_{k=0}^r q_k(r) \cdot (k+1)^{1-\nu} \Bigr)\cdot \Phi_t
\end{align*}

\nomore{
When $\nu=2$, and using Equation~\eqref{eq:1/k+1}, the expected decrease of
potential is bounded by
\begin{eqnarray*}
  \sum_i\sum_{k=0}^{r}\delta^k_iq_k(r) &\ge& 
  \sum_i(w_i(t)^2-w_i(t))\sum_{k=0}^{r}\lp
  1-\frac{1}{k+1}\rp q_k(r) \\
  &=& \sum_i(w_i(t)^2-w_i(t))\lp1-\frac{m-1}{r+1}\lp1-
  \lp\frac{m-2}{m-1}\rp^{r+1}\rp\rp 
\end{eqnarray*}
This shows that $\sexpect{\Phi_{t+1}\mid\F_t}\le h(r_t)\Phi_t$ where
\begin{equation*}
  h(r_t)=\frac{m-1}{r_t+1}\lp1-\lp\frac{m-2}{m-1}\rp^{r_t+1}\rp 
\end{equation*}
Deriving with respect to $r$ shows that $(r)/-\log_2h(r)$ is
increasing. Thus $\lambda\bydef\max_{1\le r\le
  m}r/-m\log_2h(r)=(m-1)/-m\log_2h(1)$. A direct computation shows that
$\lambda\le 1/-\log_2(1-1/e)$. See Appendix~\ref{apx:lambda_coop} for
details. 
}

\begin{theorem}
  The makespan $C^{\mathrm{coop}}_{\max}$ of $W=n$ unit independent tasks
  scheduled with cooperative work stealing satisfies:
  \begin{itemize}
  \item[(i)]
    ~~$\displaystyle \expect{C^{\mathrm{coop}}_{\max}} \le \frac{W}{m} +
    2.88\cdot\log_2 W+3.4$
  \item[(ii)]
    ~~$\displaystyle \proba{C^{\mathrm{coop}}_{\max} \ge \frac{W}{m} +
      2.88\cdot \log_2 W + 2+
      \log_2\lp\frac1\epsilon\rp} \le \epsilon$.
  \end{itemize}
\end{theorem}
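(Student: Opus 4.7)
The plan is to apply Theorem~\ref{thm:main} to the potential
$\Phi_t=\sum_{i=1}^m(w_i(t)^\nu-w_i(t))$ with the
expected decrease inequality $\expect{\Phi_{t+1}\mid\F_t}\le h(r_t)\Phi_t$ where
\begin{equation*}
h(r)\bydef 1-\sum_{k=0}^{r}\Bigl(1-(k+1)^{1-\nu}\Bigr)q_k(r)
\end{equation*}
is the factor derived in the lines preceding the statement. The makespan bound then follows, as in all earlier sections, from the identity $m\cdot C^{\mathrm{coop}}_{\max}=W+R$.

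First, I would check that $\Phi_t$ is a valid termination potential in the sense of Theorem~\ref{thm:main}. Since $w^\nu-w\ge 0$ with equality for $w\in\{0,1\}$ and $w^\nu-w\ge 2^\nu-2>0$ for $w\ge 2$, the event $\Phi_t<1$ forces every queue to contain at most one task; no steal can then succeed and the schedule terminates in one additional step. Second, I would estimate
$\lambda(\nu)=\max_{1\le r\le m-1}r/(-m\log_2 h(r))$. Mirroring the commented derivation for the special case $\nu=2$, where $h$ admits the closed form $(m-1)(1-((m-2)/(m-1))^{r+1})/(r+1)$ and the ratio is monotonic in $r$, I would show that the maximizer is either $r=1$ or $r=m-1$ depending on $\nu$, and obtain a closed-form expression for $\lambda(\nu)$ in the limit $m\to\infty$.

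Third, from $\Phi_0\le W^\nu$ and Theorem~\ref{thm:main}(ii), I would deduce
\begin{equation*}
\expect{R}\le m\nu\lambda(\nu)\log_2 W+m\Bigl(1+\frac{\lambda(\nu)}{\ln 2}\Bigr),
\end{equation*}
whence dividing by $m$ and adding $W/m$ yields (i). Part (ii) then follows \emph{mutatis mutandis} from Theorem~\ref{thm:main}(i) with the substitution $u=m\lambda(\nu)\log_2(1/\epsilon)$, exactly as in the proof of Theorem~\ref{thm:unit}(ii). The numerical constants $2.88$ and $3.4$ in the statement come from a one-variable minimization of $\nu\lambda(\nu)$ over $\nu>1$, whose optimum will lie somewhat below the $\nu\approx 2.94$ of Theorem~\ref{thm:unit_nu}, consistent with the roughly $12\%$ gain foreshadowed in the section's introduction.

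The main obstacle will be the analysis of $h(r)$: in the non-cooperative setting of Section~\ref{ssec:new-potential}, $h(r)$ was essentially the single probability $q(r)$ times a fixed factor, whereas here it is a binomially-weighted average of the weights $1-(k+1)^{1-\nu}$ over $k\in\{0,\dots,r\}$. Locating the maximizer of $r\mapsto r/(-m\log_2 h(r))$ and reducing it to a clean closed form will require either a careful derivative/convexity argument on the resulting binomial generating function, or a coupling-style bound comparing the cooperative $h(r)$ to its non-cooperative counterpart. Once $\lambda(\nu)$ is in hand, the remaining numerical optimization over $\nu$ is routine.
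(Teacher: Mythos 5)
Your proposal follows essentially the same route as the paper: the same potential $\Phi_t=\sum_i(w_i(t)^\nu-w_i(t))$, the same $h(r)$, the same termination argument (all queues have at most one task when $\Phi_t<1$, costing one extra step), and the same application of Theorem~\ref{thm:main} with $\Phi_0\le W^\nu$. The only divergence is at the constant: the paper declares the exact analysis of $h(r)$ intractable and simply verifies numerically that $3\lambda^{\mathrm{coop}}(3)<2.88$ at $\nu=3$ (not at some $\nu<2.94$ as you speculate), so the closed-form maximization you flag as the main obstacle is not actually carried out in the paper either.
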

\begin{proof}
  The proof is very similar to the one of Theorem~\ref{thm:unit}.
Let
$$h(r)\bydef1 -\sum_{k=0}^{r}(1-(k+1)^{1-\nu})\cdot q_k(r)$$ and
$$\lambda^{\mathrm{coop}}(\nu)\bydef\max_{1\le r\le m}
\frac{r}{-m\cdot\log_2h(r)}.$$
Using Theorem~\ref{thm:main}, we have:
\begin{equation*}
  \expect{C^{\mathrm{coop}}_{\max}} \le \frac{W}{m} +
  \nu\lambda^{\mathrm{coop}}(\nu)\cdot\log_2 W+\frac{\lambda(\nu)}{\ln2}+1. 
\end{equation*}
In the general case the exact computation of
$h(r)$ is intractable. However, by a
numerical computation, one can show that 
$3\lambda^{\mathrm{coop}}(3)<2.88$.
  
When $\Phi_t<1$, we have $\sum_i
  w_i(t)^\nu-w_i(t)<1$. This implies that for all processor $i$, $w_i(t)$
  equals $0$ or $1$.  This adds (at most) one step of computation at the
  end of the schedule. As $\lambda(3)/\ln(2) + 1 + 1 = 3.4$, we obtain
the calimed bound.
  \qed
\end{proof}

\nomore{
In particular, the factor in front of $\log_2 W$ is
$-2/\log_2(1-\frac{1}{e})<3.02$.  A similar computation could have been
done replacing $w_i(t)^2$ by $w_i(t)^\nu$. In that case, one can show that
if a processor $i$ receives $k$ steals requests, then the decrease of
potential is $(1-(k+1)^{1-\nu})(w_i(t)^\nu-w_i(t))$. 
}

Compared to the situation with no cooperation among thieves, the number of
steal requests is reduced by a factor $3.24/2.88\approx 12\%$. We will see
in Section~\ref{sec:simu} that this is close to the value obtained by
simulation.

\begin{remark}
The exact computation can be accomplished for $\nu=2$ \citep{isaac}
and leads to a constant factor
of $2\lambda^{\mathrm{coop}}(2)\le -2/\log_2(1-\frac{1}{e})<3.02$.
\end{remark}

%%% Local Variables: 
%%% mode: latex
%%% TeX-master: "paper"
%%% End: 
\section{Weighted independent tasks}
\label{sec:weighted}

In this section, we analyze the number of work requests for weighted
independent tasks. Each task $j$ has a processing time $p_j$ which is
unknown. When an idle processor attempts to steal a processor, half of the
tasks of the victim are transfered from the active processor to the idle
one. A task that is currently executed by a processor cannot be stolen. If
the victim has $2k(+1)$ tasks (plus one for the task that is currently
executed), the work is split in $k(+1)$, $k$. If the victim has $2k+1(+1)$
tasks, the work is split in $k(+1)$, $k+1$. 

In all this analysis, we consider that the scheduler does not know the
weight of the different tasks $p_j$. Therefore, when the work is split in
two parts, we do not assume that the work is split fairly (see for example
Figure~\ref{fig:evol_weighted}) but only that the number of tasks is split
in two equal parts.

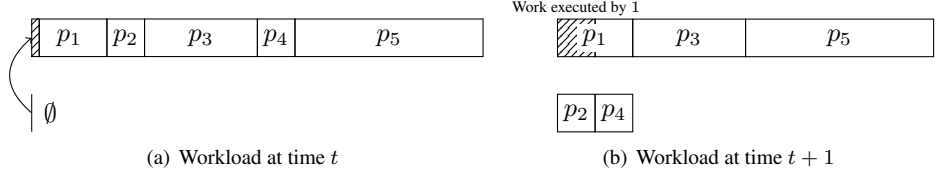
\begin{figure}[ht]
  \centering
    \subfigure[Workload at time $t$]{
    \begin{tikzpicture}[yscale=.5]
      \tikzstyle{executed}=[pattern=north east lines];
      %\node at (.1,1.3) {\tiny Execution started by $1$};
      \draw[executed] (0,0) rectangle (.1,1); 
      \draw (0.1,0) rectangle (1,1); \node at (.5,.5) {$p_1$};
      \draw (1,0) rectangle (1.5,1); \node at (1.25,.5) {$p_2$};
      \draw (1.5,0) rectangle (3,1); \node at (2.25,.5) {$p_3$};
      \draw (3,0) rectangle (3.5,1); \node at (3.25,.5) {$p_4$};
      \draw (3.5,0) rectangle (6,1); \node at (4.75,.5) {$p_5$};
      %%% Who is stealing who:
      \draw (0,-2) rectangle (0,-1); \node at (.25,-1.5) {$\emptyset$};
      \draw (0,-1.5) edge[bend left,->] (0,.5);
    \end{tikzpicture}
  }~~~
    \subfigure[Workload at time $t+1$]{
    \begin{tikzpicture}[yscale=.5]
      \tikzstyle{executed}=[pattern=north east lines];
      \draw[executed] (0,0) rectangle (.5,1); 
      \node at (.25,1.3) {\tiny Work executed by $1$};
      \draw (.5,0) rectangle (1,1); 
      \node[fill=white,inner sep=2pt] at (.5,.5) {$p_1$};
      \draw (1,0) rectangle (2.5,1); \node at (1.75,.5) {$p_3$};
      \draw (2.5,0) rectangle (5,1); \node at (3.75,.5) {$p_5$};
      \draw (0,-2) rectangle (0.5,-1); \node at (.25,-1.5) {$p_2$};
      \draw (.5,-2) rectangle (1,-1); \node at (.75,-1.5) {$p_4$};
      %%% Exectuted 
    \end{tikzpicture}
  }
  \caption{Evolution of the repartition of tasks during one time step. At
    time $t$, one processor has all the tasks. $p_1$ can not be stolen
    since the processor $1$ has already started executing it.  After one
    work request done by the second processor, one processor has $3$ tasks
    and one has $2$ tasks but the workload may be very different, depending
    on the processing times $p_j$. }
  \label{fig:evol_weighted}
\end{figure}
\subsection{Definition of the potential function and expected decrease}

As the processing times are unknown, the work cannot be shared evenly
between both processors and can be as bad as one processor getting all the
smallest tasks and one all the biggest tasks (see
Figure~\ref{fig:evol_weighted}).  Let us call $w_i(t)$ the \emph{number of
  tasks} possessed by the processor $i$. The potential of the system at
time $t$ is defined as:
\begin{equation}
  \Phi_t \bydef \sum_{i} \lp w_i(t)^\nu - w_i(t)\rp.
  \label{eq:pot_weighted}
\end{equation}
During a work request, half of the tasks are transfered from an active
processor to the idle processor. If the processor $j$ is stealing tasks
from processor $i$, the number of tasks possessed by $i$ and $j$ at time
$t+1$ are $w_j(t+1)=\ceil{w_i(t)/2}$ and
$w_i(t+1)=\floor{w_i(t)/2}$. Therefore, the decrease of potential is equal
to the one of the cooperative steal of Equation~\ref{eq:1/k+1} for $k=1$:
%and the decrease of potential due to a successful work request on processor $i$ is
$$\delta_i\ge(1-2^{1-\nu})\cdot \lp w_i(t)^\nu-w_i(t)\rp.$$
Following the analysis of 
Section~\ref{ssec:coop}, this shows that in average:
\begin{equation}
  \expect{\Phi_{t+1}} \le (1-(1-2^{1-\nu})q(r))\cdot \Phi_t. 
  \label{eq:delta_phi_weighted}
\end{equation}

\subsection{Bound on the makespan}

Equation~\ref{eq:delta_phi_weighted} allows us to apply
Theorem~\ref{thm:main} to derive a bound on the makespan of weighted tasks
by the distributed list scheduling algorithm. This bound differs from the
one for unit tasks only by an additive term of $p_{\max}$.
%This shows that
%the price of decentralization does not depend on the weight of the tasks.

\begin{theorem}
  \label{thm:weighted}
  Let $p_{\max}\bydef\max p_j$ be the maximum processing times.  The expected
  makespan to schedule $n$ weighted tasks of total processing time $W=\sum
  p_j$ by DLS is bounded by
  \begin{align*}
    \expect{C_{\max}} &\le \frac{W}{m} + \frac{m-1}{m}p_{\max} +
    3.24 \cdot \lp \log_2 n+\frac{1}{2\ln2}
    \rp+1% \\
    % &\le\frac{P}{m} + \frac{m-1}{m}p_{\max} +
    % 3.65\cdot \lp \log_2 W+\frac{1}{2\ln2}\rp+1 
  \end{align*}
\end{theorem}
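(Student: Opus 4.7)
The plan is to follow the template of Theorem~\ref{thm:unit_nu} from Section~\ref{ssec:new-potential}, applied to the potential $\Phi_t=\sum_i(w_i(t)^\nu-w_i(t))$ of Equation~\eqref{eq:pot_weighted}, and to append a short \emph{tail} argument that accounts for the extra idle time accumulated after $\Phi_t$ drops below $1$. That tail argument is the only ingredient that distinguishes this statement from its unit-task counterpart and is what produces the additive $\tfrac{m-1}{m}p_{\max}$ term. A pleasant feature is that the contraction estimate is already done: because $\Phi_t$ only depends on the \emph{number} of tasks in each queue, not on their weights, Equation~\eqref{eq:delta_phi_weighted} transfers verbatim from the unit-task setting.

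First, since $w_i(t)$ counts tasks and $\sum_i w_i(0)=n$, convexity of $x\mapsto x^\nu$ for $\nu>1$ gives $\Phi_0\le n^\nu$ (the worst case being all $n$ tasks concentrated on one processor). By Equation~\eqref{eq:delta_phi_weighted}, $\expect{\Phi_{t+1}\mid\F_t}\le h(r_t)\Phi_t$ with $h(r)=1-(1-2^{1-\nu})q(r)$, which is exactly the contraction function analysed in Section~\ref{ssec:new-potential}. Let $\tau=\min\{t:\Phi_t<1\}$ and $R_\tau=\sum_{s<\tau}r_s$. Applying Theorem~\ref{thm:main} together with $\log_2\Phi_0\le\nu\log_2 n$ yields
\begin{equation*}
\expect{R_\tau}\le m\,\nu\lambda(\nu)\log_2 n+m\left(1+\frac{\lambda(\nu)}{\ln 2}\right).
\end{equation*}
Choosing $\nu\approx 2.94$ as in Section~\ref{ssec:new-potential} gives $\nu\lambda(\nu)\le 3.24$, and a direct numerical check at the maximising $r=m-1$ also shows $\lambda(\nu)\le 1.62$, so the logarithmic and constant contributions to $\expect{R_\tau}/m$ together are bounded by $3.24\bigl(\log_2 n+\tfrac{1}{2\ln 2}\bigr)+1$.

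The new element is the tail argument. Since $w^\nu-w\ge 2^\nu-2>1$ for every integer $w\ge 2$ (using $\nu>\log_2 3$), the condition $\Phi_\tau<1$ forces $w_i(\tau)\in\{0,1\}$ for every $i$. From time $\tau$ onward no queue ever holds more than one task, so by the rules of the model every subsequent steal request fails and each surviving task simply finishes on its current processor; hence $C_{\max}-\tau\le p_{\max}$. As long as the schedule has not completed at least one processor is still executing, so during any time step of the tail at most $m-1$ processors issue steal requests, giving $R-R_\tau\le(m-1)\,p_{\max}$. Using the identity $m\cdot C_{\max}=W+R$ (each time slot a processor either executes or steals), taking expectations and dividing by $m$ gives
\begin{equation*}
\expect{C_{\max}}\le\frac{W}{m}+\frac{\expect{R_\tau}}{m}+\frac{m-1}{m}\,p_{\max},
\end{equation*}
and substituting the bound on $\expect{R_\tau}$ delivers the claimed inequality. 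The main obstacle is precisely this tail step, but it reduces to the elementary observation that $\Phi_t<1$ forces $w_i\in\{0,1\}$; the rest of the argument reuses the computations of Section~\ref{ssec:new-potential} with no modification.
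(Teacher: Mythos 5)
Your proposal is correct and follows essentially the same route as the paper: bound $\Phi_0$ by $n^\nu$, invoke the contraction of Equation~\eqref{eq:delta_phi_weighted} via Theorem~\ref{thm:main} with the constant $\nu\lambda(\nu)\le 3.24$ from Section~\ref{ssec:new-potential}, and handle the final phase (all queues holding at most one task) separately to produce the $\frac{m-1}{m}p_{\max}$ term. Your justification that $\Phi_\tau<1$ forces $w_i\in\{0,1\}$ (via $2^\nu-2>1$) is in fact slightly more robust than the paper's appeal to $\Phi_t\in\N$, which is not literally true for non-integer $\nu$.
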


\begin{proof}
  Let $\Phi_t$ be the potential defined by
  Equation~\ref{eq:pot_weighted}. At time $t=0$, the potential of the
  system is bounded by $W^{\nu}-W$. Therefore, by Theorem~\ref{thm:main}, the
  number of work requests before $\Phi_t<1$ is bounded by
  \begin{equation*}
    m\cdot\lambda\cdot\lp\log_2\Phi_0+1+\frac1{\ln2}\rp \le
    m\cdot\nu\lambda(\nu)\cdot\lp2\log_2W+1+\frac1{\ln2}\rp,
  \end{equation*}
  where $\nu\lambda(\nu)<3.24$ is the same constant as the bound for the unit tasks
  with the potential function $\sum_iw_i^\nu$ of Theorem~\ref{thm:unit_nu}.

  As $\Phi_t\in\N$, $\Phi_t<1$ implies that $\Phi_t=0$. Moreover, by
  definition of $\Phi_t$, this implies that for all $i$:
  $w_i(t)^\nu-w_i(t)=0$, which implies that for all $i$:
  $w_i(t)\le1$. Therefore, once $\Phi_t$ is equal to $0$, there is at most
  one task per processor. This phase can last for at most $p_{\max}$ unit
  of time, generating at most $(m-1)p_{\max}$ work requests. 
  \qed
\end{proof}

\begin{remark}
The same analysis applies for the cooperative stealing scheme
 of Section~\ref{ssec:coop}
leading to the same improved bound in $2.88 \log_2 n$ instead of
$3.24 \log_2 n$.
\end{remark}

%%% Local Variables: 
%%% mode: latex
%%% TeX-master: "paper"
%%% End: 
\section{Tasks with precedences}
\label{sec:dag}
In this section, we show how the well known non-blocking work stealing of \cite{ABP}
(denoted ABP in the sequel) can be analyzed with our method which provides tighter bounds
for the makespan. We first recall the WS scheduler of ABP, then we show how to define
the amount of work on a processor $w_i(t)$, finally we apply the analysis of Section~\ref{sec:thm}
to bound the makespan.

\subsection{ABP work-stealing scheduler}

Following~\cite{ABP}, a multithreaded computation is modeled as a directed acyclic
graph $G$ with $W$ unit tasks task and edges define precedence constraints. 
There is a single source task and the out-degree is at most $2$. 
The critical path of $G$ is denoted by $D$.
ABP schedules the DAG $G$ as follows.
Each processor $i$ maintains a double-ended queue (called a deque) $Q_i$ of ready tasks.
At each slot, an active processor $i$ with a non-empty deque executes the task at
the bottom of its deque $Q_i$; once its execution is completed, this task is popped
from the bottom of the deque,  enabling --~\textit{i.e.} making ready~-- $0$, $1$ or $2$ child
tasks that are pushed at the bottom of $Q_i$. At each top, an idle processor $j$ with
an empty deque $Q_j$ becomes a thief: it performs a steal request on another randomly
chosen victim deque; if the victim deque contains ready tasks, then its top-most task is
popped and pushed into the deque of one of its concurrent thieves. 
If $j$ becomes active just after its steal request, the steal request is said
successful. Otherwise, $Q_j$ remains empty and the steal request fails which may occur
in the three following situations: either the victim deque $Q_i$ is empty; or, $Q_i$ contains only
one task currently in execution on $i$; or, due to contention, another thief performs
a successful steal request on $i$ simultaneously.

\subsection{Definition of $w_i(t)$}

Let us first recall the definition of the {\em enabling tree} of~\cite{ABP}. 
If the execution of task $u$ enables task $v$, then the edge $(u,v)$ of $G$ is an enabling edge.
%, and $u$ is called the designated parent of $v$.
%Note that every node except the root node has exactly one designated parent, so the
The sub-graph of $G$ consisting of only enabling edges forms a rooted tree called the
enabling tree. 
%Note that each execution of the computation may have a different enabling tree.
We denote by $h(u)$ the height of a task $u$ in the enabling tree. The root of the DAG
has height $D$. Moreover, it has been shown in \cite{ABP} that tasks in the deque
have strictly decreasing height from top to bottom except for the two bottom most tasks
which can have equal heights.

We now define $w_i(t)$, the amount of work on processor $i$ at time $t$.
Let $h_t$ be the maximum height of all tasks in the deque.  If the deque contains
at least two tasks including the one currently executing we define $w_i(t)=(2\sqrt{2})^{h_t}$.
If the deque contains only one task currently executing we define $w_i(t)=\frac12\cdot(2\sqrt{2})^{h_t}$.
The following lemma states that this definition of $w_i(t)$ behaves in a similar
way than the one used for the independent unit tasks analysis of Section~\ref{sec:unit}.

\begin{lemma} \label{lem:ABPhalved}
For any active processor $i$, we have $w_i(t+1) \leq w_i(t)$. 
Moreover, after any successful steal request from a processor $j$ on $i$,
$w_i(t+1) \leq w_i(t)/2$ and $w_j(t+1) \leq w_i(t)/2$
and if all steal requests are unsuccessful we have
$w_i(t+1) \leq w_i(t)/\sqrt{2}$.
\end{lemma}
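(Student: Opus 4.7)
The plan is to prove each of the three inequalities by case analysis on the state of the deque $Q_i$ at time $t$ and on what happens during the time step. The key structural fact, inherited from \cite{ABP} and recalled just before the lemma, is that heights in $Q_i$ are strictly decreasing from top to bottom, except possibly for the two bottom-most tasks whose heights may coincide. I would first observe that a time step at processor $i$ consists of at most two operations: (i) execution of the bottom task at some height $h_b$, which pops it from $Q_i$ and pushes $0$, $1$, or $2$ of its children, all at height $h_b-1$, at the bottom of $Q_i$; and (ii) possibly a successful steal, in which the top task at height $h_t$ is popped and transferred to the initially empty deque of a thief $j$. The proofs of the three inequalities then reduce to a direct computation in each sub-case, using the two-piece definition of $w_i$ (the $\tfrac12$ factor for one-task deques versus $1$ for multi-task deques).

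For the monotonicity claim, since no successful steal occurs on $i$ in that part, the top of $Q_i$ is untouched and the maximum height cannot increase. Splitting on whether $|Q_i(t)|=1$ or $|Q_i(t)|\ge 2$ and on the number of enabled children, one checks directly that $w_i(t+1)\le w_i(t)$; the auxiliary factor $\tfrac12$ in the one-task case is exactly what is needed so that the transition from $|Q_i|=2$ to $|Q_i|=1$ (when zero children are enabled) remains monotone.

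For the successful-steal claim, I would use the strict decrease of heights from the top: the second-highest task has height at most $h_t-1$. After the top is stolen, the new maximum height in $Q_i$ is at most $h_t-1$, giving $w_i(t+1)\le (2\sqrt{2})^{h_t-1}=w_i(t)/(2\sqrt{2})\le w_i(t)/2$. For the thief $j$, $Q_j$ was empty and gains exactly one task at height $h_t$, so $w_j(t+1)=\tfrac12(2\sqrt{2})^{h_t}=w_i(t)/2$.

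For the third claim, the decisive identity is $2\cdot(2\sqrt{2})^{h-1}=(2\sqrt{2})^{h}/\sqrt{2}$, which is precisely why the base $2\sqrt{2}$ is chosen: pushing two children one level below a task at height $h$ matches the $1/\sqrt{2}$ bound with equality. I expect the main obstacle to be verifying this bound uniformly across all sub-cases of $|Q_i(t)|$ and of the number of children enabled, and in particular checking that the transition between a one-task deque and a two-task deque is correctly amortized by the $\tfrac12$ factor built into the definition of $w_i$, so that the two regimes yield consistent bounds.
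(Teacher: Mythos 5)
Your proposal follows the same route as the paper: a case analysis on $\lvert Q_i(t)\rvert$ and on the number of enabled children, using the deque height-ordering property and the factor $\tfrac12$ in the definition of $w_i$ to glue the one-task and multi-task regimes together; your identification of $2\cdot(2\sqrt{2})^{h-1}=(2\sqrt{2})^{h}/\sqrt{2}$ as the reason for the base $2\sqrt{2}$, and of the $\tfrac12$ factor as what makes the $\lvert Q_i\rvert=2\to 1$ transition monotone, matches the paper's computations exactly.

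One step as stated would fail. In the successful-steal case you justify $w_i(t+1)\le(2\sqrt{2})^{h_t-1}$ by claiming that ``the second-highest task has height at most $h_t-1$'' from the strict decrease of heights. But the structural property explicitly exempts the two bottom-most tasks, which may have \emph{equal} heights; when $\lvert Q_i(t)\rvert=2$ those two tasks are the whole deque, so the remaining task after the steal can still have height $h_t$. The conclusion survives for a different reason, and the paper's proof uses it: either the deque is reduced to a single task of height at most $h_t$, in which case the $\tfrac12$ factor in the definition already gives $w_i(t+1)\le\tfrac12(2\sqrt{2})^{h_t}=w_i(t)/2$ (alternatively, that remaining task is the bottom one, hence is executed during the same slot and replaced by children at height $h_t-1$); or at least two tasks remain, and only then does the strict-decrease argument give maximum height at most $h_t-1$ and the stronger bound $(2\sqrt{2})^{h_t-1}<w_i(t)/2$. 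With that sub-case repaired, your argument coincides with the paper's proof.
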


\begin{proof}
We first analyze the execution of one task $u$ at the bottom
of the deque. Executing task $u$ enables at most two tasks and
these tasks are the children of $u$ in the enabling tree. 
If the deque contains more than one task, the top most task has
height $h_t$ and this task is still in the deque at time $t+1$.
Thus the maximum height does not change and $w_i(t)= w_i(t+1)$.
If the deque contains only one task, we have
$w_i(t) = \frac12\cdot(2\sqrt{2})^{h_t}$ and
$w_i(t+1) \leq (2\sqrt{2})^{h_t-1}$. Thus $w_i(t+1)\leq w_i(t)$.

We now analyze a successful steal from processor $j$. In this case, the deque of processor
$i$ contains at least two tasks and  $w_i(t) = (2\sqrt{2})^{h_t}$. The stolen task is one
with the maximum height and is the only task in the deque of processor $j$
thus $w_j(t+1) =\frac12\cdot(2\sqrt{2})^{h_t} \leq w_i(t)/2$. For the processor $i$, either
its deque contains only one task after the steal with height at most $h_t$
and $w_i(t+1)\leq \frac12\cdot(2\sqrt{2})^{h_t} \leq w_i(t)/2$, either there are still more
than $2$ tasks and $w_i(t+1)\leq (2\sqrt{2})^{h_t-1} < w_i(t)/2$.

Finally, if all steal requests are unsuccessful, the deque of processor $i$ contains
at most one task. If the deque is empty $w_i(t+1)=w_i(t)=0$ and thus 
$w_i(t+1) \leq w_i(t)/\sqrt{2}$. If the deque contains exactly one task, 
$w_i(t) = \frac12\cdot(2\sqrt{2})^{h_t}$ and $w_i(t+1) \leq (2\sqrt{2})^{h_t-1}$
thus $w_i(t+1)\leq w_i(t)/\sqrt{2}$. \qed
\end{proof}

\subsection{Bound on the makespan}

To study the number of steals, we follow the analysis presented in Section~\ref{sec:thm}
with the potential function $\Phi(t) = \sum_i w_i(t)^2$.
Using results from lemma~\ref{lem:ABPhalved}, we compute the decrease
of the potential $\delta_i(t)$ due to steal requests on processor $i$ by distinguishing 
two cases. If there is a successful steal from processor $j$,
$$ \delta_i(t) = w_i(t)^2 - w_i(t+1)^2 - w_j(t+1)^2
\geq w_i(t)^2 - 2\cdot\Bigl(\frac{w_i(t)}{2}\Bigr)^2
\geq \frac12 \cdot w_i(t)^2.$$
If all steals are unsuccessful, the decrease of the potential is
$$ \delta_i(t) = w_i(t)^2 - w_i(t+1)^2
\geq w_i(t)^2 - \Bigl(\frac{w_i(t)}{\sqrt{2}}\Bigr)^2
\geq \frac12 \cdot w_i(t)^2. $$
In all cases, $\delta_i(t) \geq w_i(t)^2 / 2$. We obtain the expected potential
at time $t+1$ by summing the expected decrease on each active processor:
\begin{align*}
\expect{\Phi_t - \Phi_{t+1} \mid \F_t } &\geq \sum_{i=0}^m \frac{w_i(t)^2}{2} \cdot q(r_t) \\
\expect{\Phi_{t+1} \mid \F_t } &\leq \Bigl(1-\frac{q(r_t)}{2}\Bigr) \cdot \Phi(t)
\end{align*}
Finally, we can state the following theorem.
\begin{theorem} \label{thm:ABPWS}
On a DAG composed of $W$ unit tasks, with critical path $D$, one source and out-degree at most $2$,
 the makespan of ABP work stealing verifies:
\begin{itemize}
\item[(i)] ~~~~
$\displaystyle \expect{C_{\max}} \leq \frac{W}{m} + \frac{3}{1-\log_2(1+\frac1e)} \cdot D+1
< \frac{W}{m} + 5.5 \cdot D+1.$
\item[(ii)] ~~~~
$\displaystyle \proba{C_{\max} \geq \frac{W}{m} + \frac{3}{1-\log_2(1+\frac1e)} \cdot
\Bigl( D+\log_2\frac{1}{\epsilon} \Bigr)+1 } 
\leq \epsilon $
\end{itemize}
\end{theorem}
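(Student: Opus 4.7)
The plan is to apply Theorem~\ref{thm:main} directly, reusing as much of the unit-task analysis as possible. The section has already established the drift hypothesis $\sexpect{\Phi_{t+1}\mid\F_t}\le h(r_t)\Phi_t$ with $h(r)=1-q(r)/2$. Crucially, this is \emph{the same} function $h$ that appeared in Section~\ref{sec:unit}, so the associated constant $\lambda=\max_{1\le r\le m-1} r/(-m\log_2 h(r))$ is again attained at $r=1$ and equals $1/(1-\log_2(1+1/e))$. This observation lets us bypass the maximization entirely and simply invoke the result used in the proof of Theorem~\ref{thm:unit}.

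The next step is to bound the initial potential. At time $0$ only the single source task, of height $D$ in the enabling tree, sits on one processor; its deque contains exactly this one currently executing task, so $w_i(0)=\tfrac12(2\sqrt 2)^D$ and $w_j(0)=0$ for $j\ne i$. Hence $\Phi_0=\tfrac14(2\sqrt2)^{2D}=2^{3D-2}$ and $\log_2\Phi_0=3D-2$. Feeding this into Theorem~\ref{thm:main}(ii) yields a bound on the number $R$ of work requests up to the first time $\tau$ with $\Phi_\tau<1$, namely $\sexpect{R}\le m\lambda(3D-2)+m(1+\lambda/\ln 2)$.

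I would then treat the tail phase separately. The values $w_i$ can take lie in $\{0,\tfrac12(2\sqrt2)^h,(2\sqrt2)^h:h\ge 0\}$; among these, only $w_i=0$ and $w_i=1/2$ have $w_i^2<1$. So once $\Phi_t<1$ every active processor is executing a single task of height $0$ in the enabling tree. By definition such a task has no enabling children, hence one additional time step empties every deque and contributes at most $m-1$ further work requests. Combined with $m\cdot C_{\max}=W+R_{\mathrm{total}}$, this gives
\begin{equation*}
  \sexpect{C_{\max}}\le \frac{W}{m}+3\lambda D+\Bigl(-2\lambda+1+\frac{\lambda}{\ln 2}+\frac{m-1}{m}\Bigr),
\end{equation*}
and a numerical check shows that the parenthesized constant is at most $1$ for $\lambda=1/(1-\log_2(1+1/e))$, yielding (i) with the claimed coefficient $3\lambda<5.5$. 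Part (ii) is obtained by applying Theorem~\ref{thm:main}(i) with $u=m\lambda\log_2(1/\epsilon)$, so that $2^{-u/(m\lambda)}=\epsilon$, and then appending the same one-step tail argument.

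The one genuinely delicate point is this constant bookkeeping: the $-2\lambda$ coming from using the sharp value $\log_2\Phi_0=3D-2$ (rather than $3D$) is precisely what absorbs the $1+\lambda/\ln 2$ from Theorem~\ref{thm:main}(ii) together with the $m-1$ residual steals from the tail phase. Everything else is a direct reduction to Theorem~\ref{thm:main} together with Lemma~\ref{lem:ABPhalved} and the already-computed value of $\lambda$.
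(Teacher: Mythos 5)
Your proof is correct and follows essentially the same route as the paper's: the same drift bound $\sexpect{\Phi_{t+1}\mid\F_t}\le(1-q(r_t)/2)\Phi_t$ fed into Theorem~\ref{thm:main}, the same initial potential $\Phi_0=\bigl(\tfrac12(2\sqrt2)^D\bigr)^2=2^{3D-2}$ whose $-2\lambda$ slack absorbs the additive terms, and the same height-$0$ tail argument contributing at most $m-1$ extra requests. The only quibble is that the maximum defining $\lambda$ is attained at $r=m-1$ (where $q(r)\to1-1/e$), not at $r=1$; since the value $1/(1-\log_2(1+\tfrac1e))$ you use is the correct bound, nothing in the argument changes.
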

\begin{proof}
The proof is a direct application of Theorem~\ref{thm:main}. As in the initial step there is only
one non empty deque containing the root task with height $D$, the initial potential is
$$\Phi(0) =\Bigl(\frac12\cdot \Bigl(2\sqrt{2}\Bigr)^D\Bigr)^2.$$
Thus the expected number of steal requests before $\Phi(t) < 1$ is bounded by
\begin{align*}
\expect{R} &\leq \lambda \cdot m \cdot \log_2 \Bigl[\Bigl(\frac12\cdot \Bigl(2\sqrt{2}\Bigr)^D\Bigr)^2\Bigr]
+ m\cdot \Bigl(1+\frac{\lambda}{\ln(2)}\Bigr) \\
&\leq 2\lambda \cdot m \cdot D \cdot \log_2 (2\sqrt{2}) + m\cdot\Bigl( 1 + \frac{\lambda}{\ln(2)} - 2\lambda\Bigr)  \\
& \leq 3\lambda \cdot m \cdot D \hspace{3cm} (\text{as}~~1+\lambda/\ln(2)-2\lambda<0)
\end{align*}
where $\lambda=(1-\log_2(1+1/e))^{-1}$ is the same constant as the bound
for the unit tasks of Section~\ref{sec:unit}.

Moreover, when $\Phi(t) < 1$,
we have $\forall i, w_i(t) < 1$. There is at most one task of height $0$
in each deque, \textit{i.e.} a leaf of the enabling tree which cannot enable any other task.
This last step generates at most $m-1$ additional steal requests.
In total, the expected number of steal
requests is bounded by  $\expect{R} \leq 3\lambda \cdot m \cdot D+m-1$. The bound on the
makespan is obtained using the relation $m\cdot C_{\max} = W+R$.

The proof of (i) applies \textit{mutatis mutandis} to prove the bound in probability (ii).
\qed
\end{proof}

\paragraph{Remark.} 
In \cite{ABP}, the authors established the upper bounds :
$$
\expect{C_{\max}} \leq \frac{W}{m} +32 \cdot D ~~~~\text{and}~~~~
\proba{C_{\max} \geq \frac{W}{m} + 64\cdot D+16\cdot\log_2\frac{1}{\epsilon} }  \leq \epsilon
$$
in Section 4.3, proof of Theorem 9.
Our bounds greatly improve the constant factors of this previous result.
%and are close to simulation values (cf. Section~\ref{sec:simu}).

%%% Local Variables: 
%%% mode: latex
%%% TeX-master: "paper"
%%% End: 
\section{Experimental study}
\label{sec:simu}

The theoretical analysis gives an upper bounds on the expected value of
the makespan and deviation from the mean for the various models we considered.
In this section,
we study experimentally the distribution of the makespan. Statistical tests
give evidence that the makespan for independent tasks follows a generalized
extreme value (gev) distribution~\citep{gev}. This was expected since such a distribution
arises when dealing with maximum of random variables. For tasks with dependencies,
it depends on the structure of the graph: DAGs with short critical path still follow
a gev distribution but when the critical path grows, it tends to a gaussian distribution.
We also study in more details the overhead to $W/m$ and show that it is approximately
$2.37\log_2 W$ for unit independent tasks which is close to the theoretical result
of $3.24\log_2 W$ (\textit{cf.} Section~\ref{sec:further}).
%Moreover, experiments show that the additive factor is less than $3\log_2 W$
%not only on average but in more than $99\%$ of the executions?

We developed a simulator that strictly follows our model.
At the beginning, all the tasks are given to processor $0$ in order to be
in the worst case, \textit{i.e.} when the initial potential $\Phi_0$ is maximum.
Each pair ($m$,$W$) is simulated $10000$ to get accurate results, with a coefficient of
variation about $2\%$.

\subsection{Distribution of the makespan}
\label{ssec:distrib}

% xp
% dist
% fit
% goodness of fit

\begin{figure}[bt]
\centering
\subfigure[Unit Tasks]{
	\includegraphics[width=0.23\textwidth]{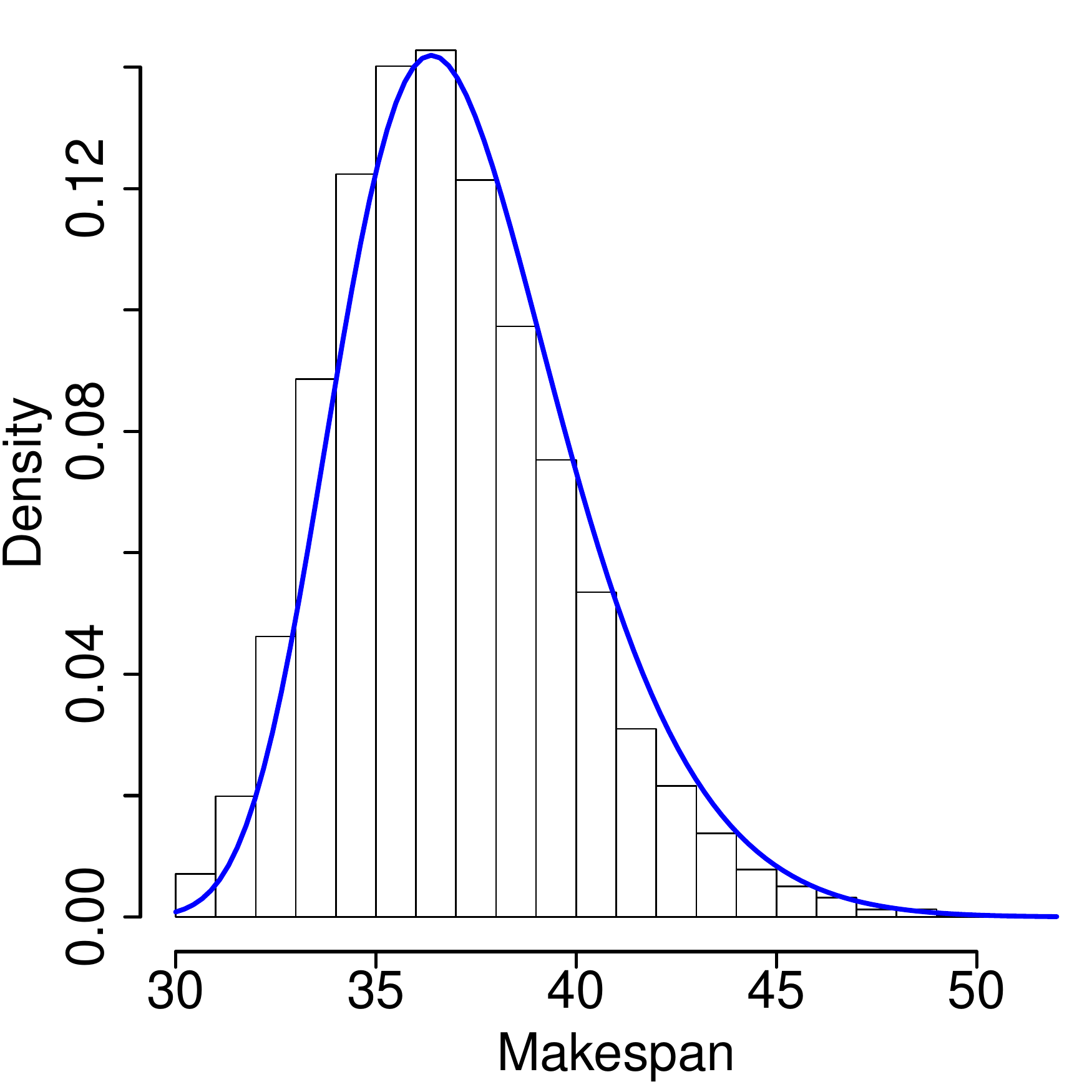}
	\label{fig:distrib-unit}
}%
\subfigure[Weighted Tasks]{
	\includegraphics[width=0.23\textwidth]{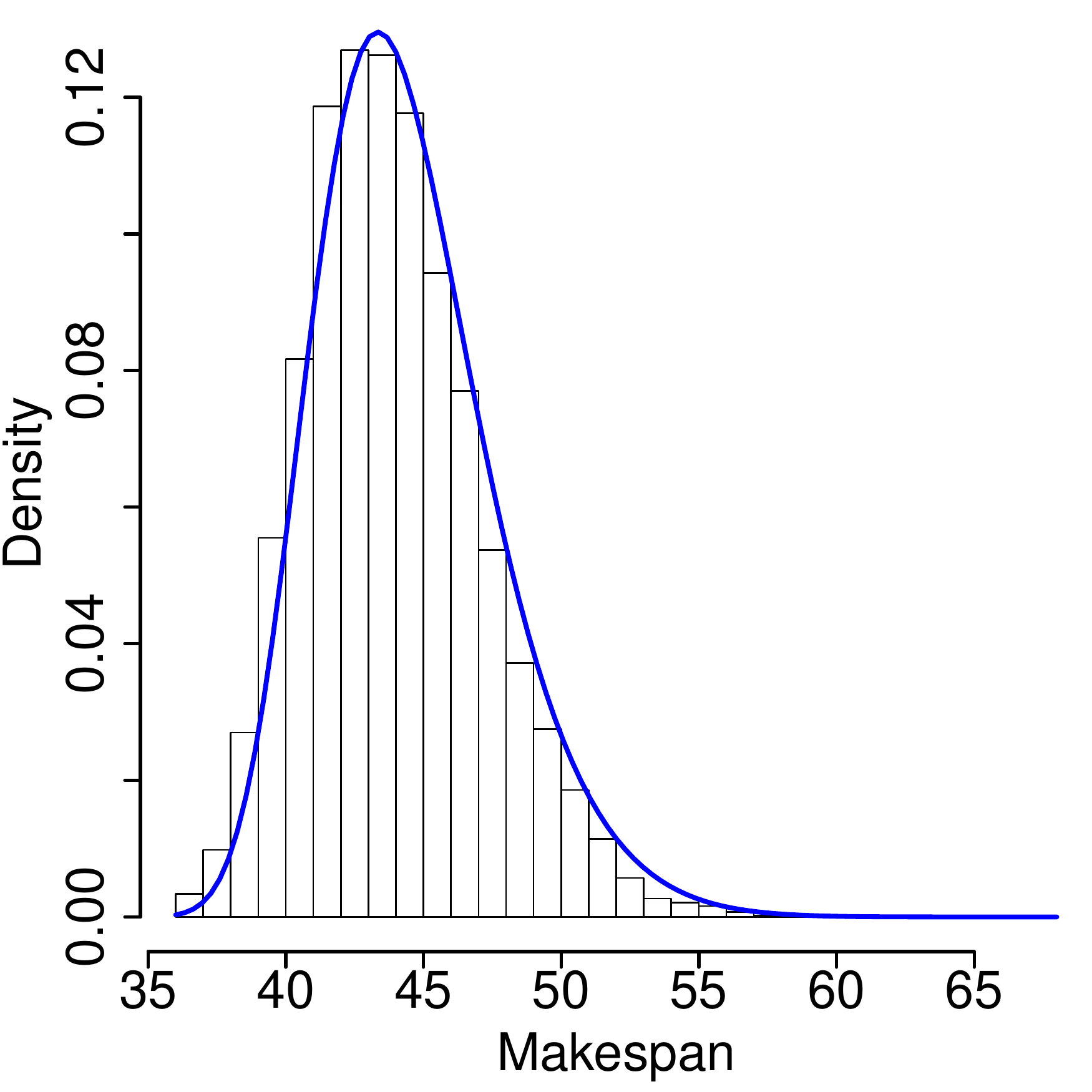}
	\label{fig:distrib-weighted}
}
\subfigure[DAG (short $D$)]{
	\includegraphics[width=0.23\textwidth]{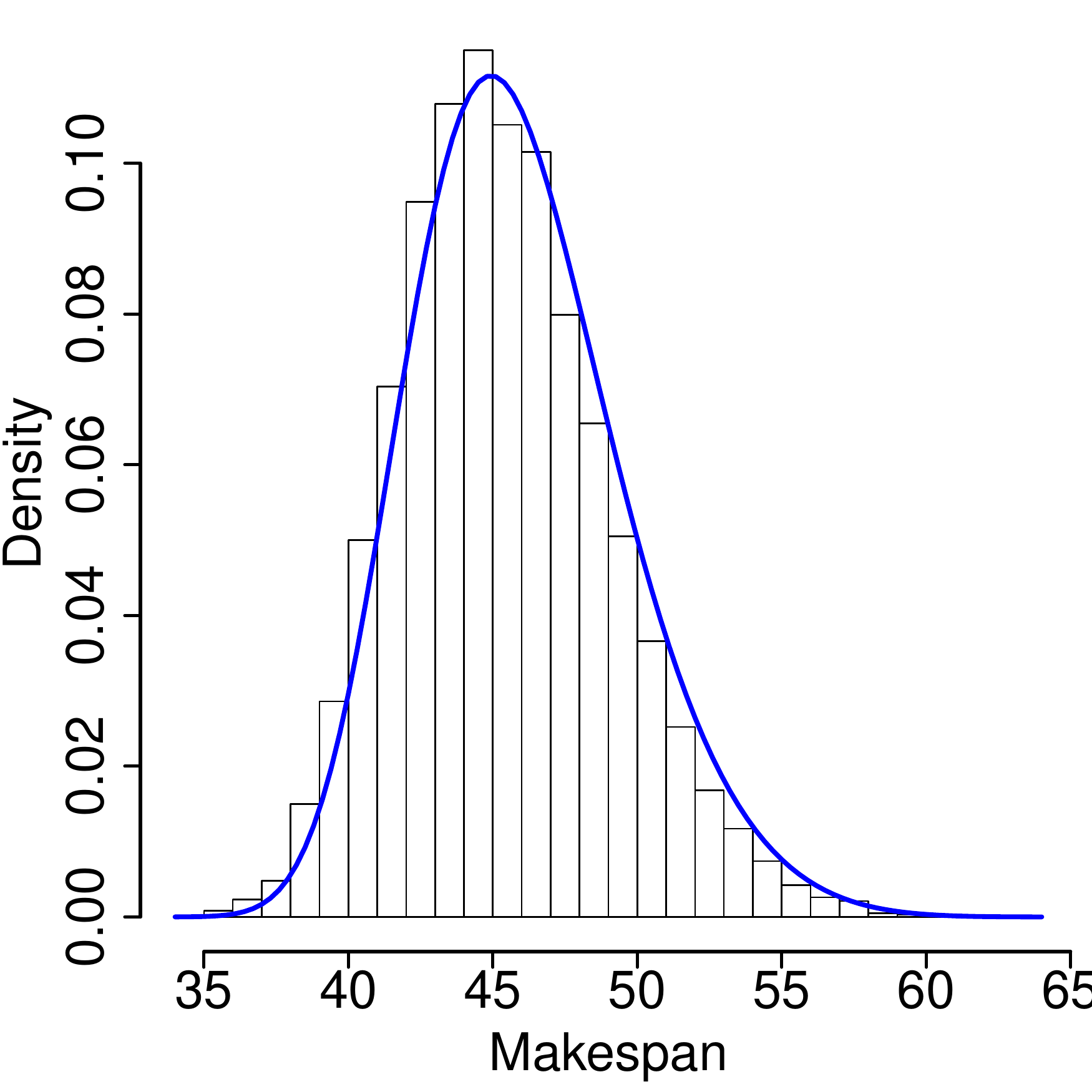}
	\label{fig:distrib-dag-short}
}%
\subfigure[DAG (long $D$)]{
	\includegraphics[width=0.23\textwidth]{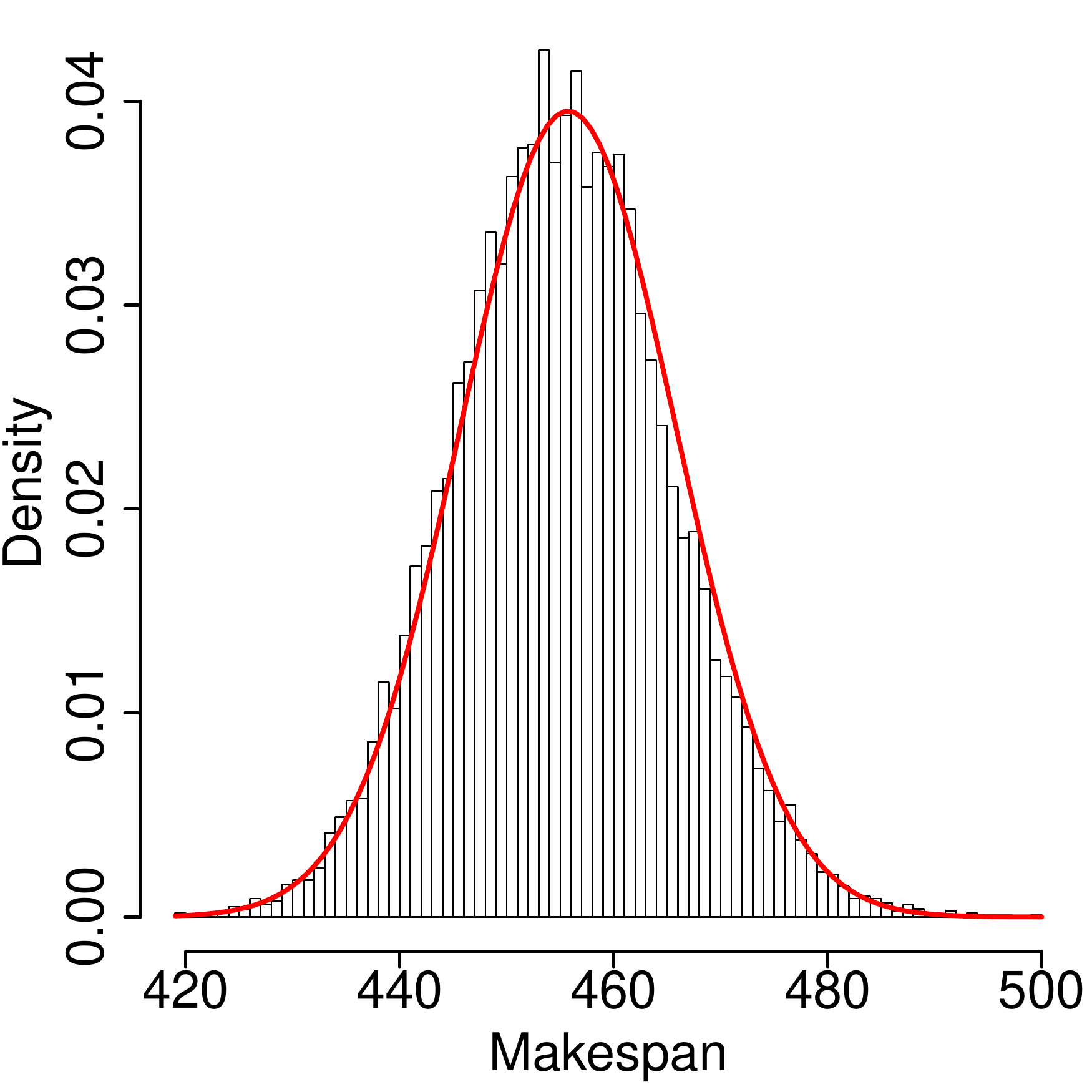}
	\label{fig:distrib-dag-long}
}%
\caption{%
\label{fig:distrib}%
Distribution of the makespan for unit independent tasks~\ref{fig:distrib-unit},
weighted independent tasks~\ref{fig:distrib-weighted} and tasks with
dependencies~\ref{fig:distrib-dag-short} and~\ref{fig:distrib-dag-long}.
The first three models follow a gev distribution (blue curves), the last
one is gaussian (red curve).}
\end{figure}

We consider here a fixed workload $W=2^{17}$ on $m=2^{10}$ processors for independent tasks
and $m=2^7$ processors for tasks with dependencies. For the weighted model,
processing times were generated randomly and uniformly between $1$ and $10$. For 
the DAG model, graphs have been generated using a layer by layer method. We generated
two types of DAGs, one with a short critical path (close to the minimum possible $\log_2 W$) and
the other one with a long critical path (around $W/4m$ in order to keep enough
tasks per processor per layer).
%For each workload (unit, weighted and DAG), we executed the simulator $10,000$ times
%to obtain the distribution of the makespan.
Fig.~\ref{fig:distrib} presents histograms for $C_{\max}-\lceil W/m\rceil$.

The distributions of the first three models (a,b,c in Fig.~\ref{fig:distrib})
are clearly not gaussian: they are asymmetrical with an heavier right tail. To fit
these three models, we use the generalized extreme value (gev) distribution~\citep{gev}. In the
same way as the normal distribution arises when studying the sum of independent and identically
distributed (iid) random variables, the gev distribution arises when studying the maximum of
iid random variables. The extreme value theorem, an equivalent of the central limit theorem for maxima,
states that the maximum of iid random variables converges in distribution to a gev distribution.
In our setting, the random variables measuring the load of each processor
are not independent, thus the extreme value theorem cannot apply directly. However, it is possible
to fit the distribution of the makespan to a gev distribution. In Fig.~\ref{fig:distrib}, the fitted
distributions (blue curve) closely follow the histograms. To confirm this graphical approach, we
performed a goodness of fit test. The $\chi^2$ test is well-suited to our data because the distribution
of the makespan is discrete.
%We categorized frequencies into bins containing at least $5$ data points.
%of size $1$ to have one integer per bin except for the tails
%where we merged bins to have at least $5$ occurences per bin.
We compared the results of the best fitted gev to the best fitted gaussian. The $\chi^2$ test strongly rejects
the gaussian hypothesis but does not reject the gev hypothesis with a p-value of more than $0.5$.
This confirms that the makespan follows a gev distribution.
We fitted the last model, DAG with long critical path, with a gaussian
(red curve in Fig.~\ref{fig:distrib-dag-long}).
In this last case, the completion time of each layer of the DAG should correspond to a gev distribution
but the total makespan, the sums of all layers, should tend to a gaussian by the central limit theorem.
Indeed the $\chi^2$ test does not reject the gaussian hypothesis with a p-value around $0.3$.

\subsection{Study of the $\log_2 W$ term}

\begin{figure}[bt]
\centering
\subfigure{
\includegraphics{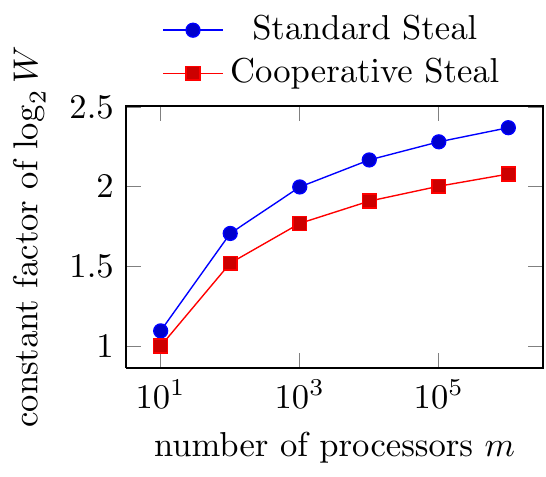}}
\subfigure{
\includegraphics{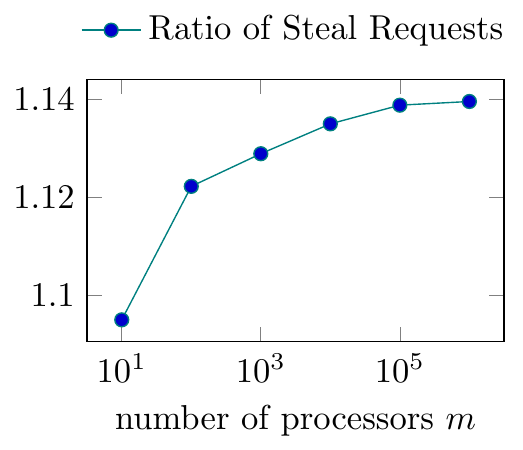}
}
\caption{(Left) Constant factor of $\log_2 W$ against the number of
processors for the standard steal and the cooperative steal.
(Right) Ratio of steal requests (standard$/$cooperative).
\label{fig:slope}%
}
\end{figure}

We focus now on unit independent tasks
as the other models rely on too many parameters (the choice of the processing
times for weighted tasks and the structure of the DAG for tasks with dependencies).
We want to show that the number of work requests is proportional to $\log_2 W$ and
study the proportionality constant. We first launch simulations
with a fixed number of processors $m$ and a wide range of work in successive powers of
$10$. A linear regression confirms the linear dependency in $\log_2 W$ with a
coefficient of determination
("r squared") greater than $0.9999$\footnote{the closer to 1, the better}.

Then, we obtain the slope of the regression for various number of processors.
The value of the slope tends to a limit
around $2.37$~(\textit{cf.} Fig.~\ref{fig:slope}(left)).
This shows that
the theoretical analysis of Theorem~\ref{thm:unit} is almost accurate with a
constant of approximately $3.24$.
We also study the constant factor of $\log_2 W$ for the cooperative steal
of Section~\ref{sec:further}. The theoretical value of $2.88$ is again close to the
value obtained by simulation $2.08$ (\textit{cf.} Figure~\ref{fig:slope}(left)).
The difference between the theoretical and the practical values can be explained
by the worst case analysis on the number of steal requests per time step
in Theorem~\ref{thm:main}.

Moreover, simulations in Fig.~\ref{fig:slope}(right) show that the ratio
of steal requests between standard and cooperative steals goes
asymptotically to 14\%. The ratio between the two corresponding theoretical
bounds is about 12\%. This indicates that the biais introduced by our
analysis is systematic and thus, our analysis may be used as a good
prediction while using cooperation among thieves. 

%%% Local Variables: 
%%% mode: latex
%%% TeX-master: "paper"
%%% End: 

\section{Concluding Remarks}
\label{sec:conclusion}

In this paper, we presented a complete analysis of the cost of distribution
in list scheduling.  We proposed a new framework, based on potential
functions, for analyzing the complexity of distributed list scheduling
algorithms. In all variants of the problem, we succeeded to characterize
precisely the overhead due to the decentralization of the list. These
results are summarized in the following table comparing makespans for
standard (centralized) and decentralized list scheduling.

\renewcommand{\arraystretch}{2.1}
\hspace{-.5cm}\begin{tabular}{l|c|c}
  &Centralized& Decentralized (WS)\\
  \hline
  Unit Tasks ~($W=n$)& 
  $\displaystyle\ceil{\frac{W}m}$& $\displaystyle\frac{W}m+3.24\log_2
  W+3.33$\\
  
  ~~~~~~~Initial repartition & -- &  $\displaystyle\frac{W}m+1.83\log_2
  \sum_{i=0}^m\Bigl(w_i-\frac{W}{m}\Bigr)^2+3.63$ \\ 

  ~~~~~~~Cooperative
  & -- &$\displaystyle\frac{W}m+2.88\log_2 W+3.4$\\
  
  \hline
  Weighted Tasks&  $\displaystyle\frac{W}m+\frac{m-1}m\cdot p_{\max}$ &
  $\displaystyle\frac{W}{m} + \frac{m-1}{m}\cdot p_{\max} +
    3.24 \log_2 n+3.33$ \\
  \hline
  Tasks w. predecences & $\displaystyle \frac{W}{m} + \frac{m-1}{m} \cdot D
  $ & $\displaystyle\frac{W}{m} + 5.5 D+1$ \\
  \hline
\end{tabular} \medskip

In particular, in the case of independent tasks, the overhead due to the
distribution is small and only depends on the number of tasks and not on
their weights. In addition, this analysis improves the bounds for
the classical work stealing algorithm of \cite{ABP} from $32 D$ to $5.5 D$.  We
believe that this work should help to clarify the links between classical
list scheduling and work stealing.

Furthermore, the framework to analyze DLS algorithms described in this paper is more
general than the method of \cite{ABP}. Indeed, we do not assume a specific
rule (\textit{e.g.} depth first execution of tasks) to manage the local lists. 
Moreover, we do not refer to the structure of the DAG (\emph{e.g.} the depth of a task
in the enabling tree) but on the work contained in each list.
Thus, we plan to extend this analysis to the case of general precedence graphs.

\nomore{
La methodo presente ici est + generique que le WS classique:
\begin{itemize}
\item Pas de regle de gestion locale de liste
\item Pas de notion explicite de structure de graphe.
\end{itemize}
A ce titre, on pense que cela devrait s'etendre a des graphe plus
generaux. 
\marginpar{Marc: traduction?}
}

\begin{acknowledgements}
The authors would like to thank Julien Bernard and Jean-Louis Roch for
fruitful discussions on the preliminary version of this work.
\end{acknowledgements}

\bibliographystyle{spbasic}
\bibliography{paper}

\end{document}